\documentclass{sigplanconf}

\usepackage[utf8]{inputenc}

\usepackage[english]{babel} 
\usepackage{amsmath} 
\usepackage{amsfonts} 
\usepackage{amssymb} 
\usepackage{amsthm}
\usepackage{mathdots}    
\usepackage{url} 
\usepackage{semantic}
\usepackage{cmll}
\usepackage{subfigure}
\usepackage{graphics,color}      
\usepackage[shortlabels]{enumitem}

\usepackage{bussproofs}
\usepackage{tikz}
\usetikzlibrary{arrows,chains,matrix,positioning,scopes}
\usepackage[final, backref, hyperindex, bookmarks]{hyperref}





\newcommand{\sub}[2]{#1[#2]}
\newcommand{\limp}{\multimap} 
\newcommand{\bang}{\oc} 
\newcommand{\limpe}[2]{\multimap}
\newcommand{\hyp}[4]{#1:(#2, #3)}
\newcommand{\letm}[3]{{\sf let} \ ! #1 = #2 \ {\sf in} \ #3}    
\newcommand{\tertype}{{\bf 1}}
\newcommand{\behtype}{{\bf B}}

\newcommand{\set}[1]{\{#1\}}
\newcommand{\st}[2]{{\sf set}(#1,#2)}

\newcommand{\rgtype}[2]{{\it {\sf Reg}_{#1} #2}}
\newcommand{\get}[1]{{\sf get}(#1)}
\newcommand{\sff}{{\sf ob}}
\newcommand{\sv}{{\sf v}}
\newcommand{\getsf}{{\sf get}}
\newcommand{\getsfv}{{\sf get_{\sv}}}

\newcommand{\setsf}{{\sf set}}
\newcommand{\setsfv}{{\sf set_{\sv}}}

\newcommand{\gcsf}{{\sf gc}}
\newcommand{\gcsfv}{{\sf gc_{\sv}}}

\newcommand{\store}[2]{#1 \Leftarrow #2}
\newcommand{\regtype}[2]{{\sf Reg}_{#1} #2}

\newcommand{\letb}[3]{\mathsf{let}\;\oc #1 = #2\;\mathsf{in}\;#3}
\newcommand{\letp}[3]{\mathsf{let}\;\mathsection #1 = #2\;\mathsf{in}\;#3}
\newcommand{\letd}[3]{\mathsf{let}\;\dagger\negmedspace #1 = #2\;\mathsf{in}\;#3}

\newcommand{\nat}{\mathsf{Nat}}
\newcommand{\bnat}{\mathsf{BNat}}

\newcommand{\reduc}{\longrightarrow}
\newcommand{\reducf}{\longrightarrow_{\sff}}
\newcommand{\treduc}{\longrightarrow^{*}}
\newcommand{\treducf}{\longrightarrow^{*}_{\sff}}
\newcommand{\preduc}{\longrightarrow^{+}}

 \newcommand{\reduci}[1]{\stackrel{#1}{\reduc}}
 \newcommand{\reducif}[1]{\stackrel{#1}{\reducf}}

 \newcommand{\treduci}[1]{\stackrel{#1}{\treduc}}
 
\newcommand{\para}{\parallel}
\newcommand{\churchn}[1]{\overline{#1}}
\newcommand{\letbang}{\mathsf{let}\,\oc}

\newcommand{\listtype}[1]{\mathsf{List}\:#1}

\newcommand{\lit}{\mathsf{list\_it}}
\newcommand{\churchl}[1]{[#1]}
\newcommand{\add}{\mathsf{add}}







\newcommand{\fv}[1]{\mathsf{FV}(#1)}

\newcommand{\fo}[2]{\mathsf{FO}(#1,#2)}
\newcommand{\foa}[1]{\mathsf{FO}(#1)}
\newcommand{\ie}{\emph{i.e.}\;}

\newcommand{\msec}{\mathsection}

\newcommand{\uaff}{\lambda}
\newcommand{\upara}{\msec}
\newcommand{\ubang}{\oc}

\newcommand{\size}[1]{\lvert #1 \rvert}
\newcommand{\unfold}[2]{\sharp^{#1}(#2)}

\newcommand{\lightll}{\textbf{LLL}}

\numberwithin{equation}{section}


\title{A Polynomial Time {\huge$\lambda$}-calculus \\ with
  Multithreading and Side Effects\thanks{Work partially supported by project
  ANR-08-BLANC-0211-01 ``COMPLICE'' and the Future and Emerging Technologies (FET)
programme within the Seventh Framework Programme for Research of the
European Commission, under FET-Open grant number: 243881 (project
CerCo).}}
\authorinfo{Antoine Madet}
{Univ Paris Diderot, Sorbonne Paris Cité\\
PPS, UMR 7126, CNRS, F-75205 Paris, France}
{madet@pps.univ-paris-diderot.fr}

\begin{document} 
\conferenceinfo{PPDP'12,} {September 19--21, 2012, Leuven, Belgium.}
\CopyrightYear{2012}
\copyrightdata{978-1-4503-1522-7/12/09}

\newtheorem{theorem}{Theorem}
\newtheorem{lemma}{Lemma}
\newtheorem{proposition}{Proposition}
\newtheorem{corollary}{Corollary}
\newtheorem{definition}{Definition}
\newtheorem{example}{Example}
\newtheorem{remark}{Remark}
\date{}
\maketitle 


\begin{abstract}
  The framework of \emph{light logics} has been extensively studied to
  control the complexity of higher-order functional programs. We
  propose an extension of this framework to multithreaded programs with side effects, focusing on the case of
  polynomial time. After introducing
   a modal $\lambda$-calculus with parallel composition
  and \emph{regions}, we prove that a
  realistic call-by-value evaluation strategy can be computed
  in polynomial time for a class of well-formed programs. The
  result relies on the simulation of call-by-value by a
  polynomial \emph{shallow-first} strategy which preserves the evaluation order of side effects.
 Then, we provide a polynomial type system that
  guarantees that well-typed programs do not go wrong. Finally, we
  illustrate the expressivity of the type system by giving
  a programming example of concurrent iteration producing side
    effects over an inductive
  data structure.
\end{abstract}
 \category{D.3}{Programming Languages}{Formal Definitions and Theory}
 \category{F.2}{Analysis of Algorithms and Problem
   Complexity}{General}
 \keywords{$\lambda$-calculus, side effect, region, thread, resource analysis.}


\section{Introduction}
Quantitative resource analysis of programs is a challenging task
in computer science. Besides being essential for the
development of safety-critical systems, it provides interesting
viewpoints on the structure of programs.

The framework of \emph{light logics} (see \emph{e.g.} \lightll~\cite{LLL},
\textbf{ELL}~\cite{Danos}, \textbf{SLL}~\cite{Lafont04})
which originates from Linear Logic~\cite{Girard87}, have been
deeply studied to control the complexity of higher-order
functional programs. In particular, 
polynomial time $\lambda$-calculi~\cite{Terui07,SoftLambda} have been
proposed as well as various
type systems~\cite{LMCS2008,CoppolaMartini2006} guaranteeing
complexity bounds of functional programs.
 Recently, Amadio and the author proposed an extension of the framework 
to a higher-order functional language with multithreading and side
effects~\cite{MadetA11}, focusing on the case of elementary
time (\textbf{ELL}).

In this paper, we consider a more reasonable complexity class: polynomial time. The
functional core of the language is the \emph{light}
$\lambda$-calculus~\cite{Terui07}  that features the
modalities \emph{bang} (written `$\oc$') and \emph{paragraph} (written
`$\msec$') of \lightll. The notion of \emph{depth} (the
number of nested modalities) which is standard in light logics is used to
control the duplication of data during the execution of programs. The
language is extended with side effects by
means of read and write operations on \emph{regions} which were
introduced to represent areas of the store~\cite{Lucassen88}. Threads
can be put in parallel and interact through a shared state.

There appears to be no direct combinatorial argument to bound a
call-by-value evaluation strategy by a polynomial.  However, 
the \emph{shallow-first} strategy (\emph{i.e.} redexes are eliminated
in a depth-increasing order) is known to be polynomial in the functional
case~\cite{LLL,Asperti98}. Using this result, Terui
shows~\cite{Terui07} that a class of \emph{well-formed} light $\lambda$-terms strongly
terminates in polynomial time (\emph{i.e.} every reduction strategy is polynomial) by
 proving that any reduction sequence can be
simulated by a \emph{longer} one which is shallow-first.
Following this method, our contribution is to show that a class of
well-formed call-by-value programs  with side effects and multithreading can
be simulated in polynomial time by shallow-first reductions.  The bound covers any scheduling policy and takes
 thread generation into account.

Reordering a reduction sequence into a shallow-first one is
non-trivial: the evaluation order of side effects must be kept unchanged in
order to preserve the semantics of the program. An additional difficulty is that
reordering produces non call-by-value sequences but fails for an
arbitrary larger relation (which may even require exponential
time). We identify an intermediate \emph{outer-bang} relation $\reducf$
which can be simulated by shallow-first ordering and this allows us to
simulate the call-by-value relation $\reduc_\sv$ which is
contained in the outer-bang relation. We illustrate this development in
Figure~\ref{summary}.
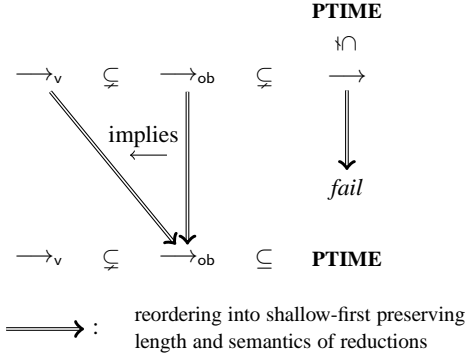
\begin{figure}[h]
\centering
  \begin{tikzpicture}                                                                          
    \matrix [matrix,column sep=0.3cm]                         
    {          
      &&&&  \node(expp){
        {\small \textbf{PTIME}} 
        };\\
      &&&&  \node(exppp){
        \rotatebox{-90}{$\subsetneq$}
        };\\
      \node(a){   $\reduc_\sv$  }; 
      &
      \node(q){    $\subsetneq$         };                                                                                    
      &
      \node(f){  $\reducf$ };                                                                            
      &
      \node(s){  $\subsetneq$ };      
      &                                                                      
      \node(v){  $\reduc$     };
      &                                                                      
      \node(sss){     };
&       \node(exp){  };\\[0.7cm]
\node(aa){};&&\node(bb){};&&\\[0.1cm]
&&&&\node(fail){\emph{fail}};
  \\[0.45cm]
        \node(asf){ $\reduc_\sv$ }; 
        &
       \node(u){  $\subsetneq$      };                                                                                    
       &
       \node(fsf){  $\reducf$ };                                                                            
       &
       \node(g){ $\subseteq$ };      
       &                                                                      
       \node(vsf){ {\small \textbf{PTIME}}       };\\
    };                                                                                        
     \draw [->,double] (a) to (fsf);                                                                 
     \draw [shorten <=1.1cm,shorten >=0.16cm,<-]  (aa) to node [above]
     {\quad\quad$\;\;$implies
     } (bb);                                                                 
     \draw [->,double] (f) to (fsf);                                                                 
     \draw [->,double] (v) to (fail);                                                                 
  \end{tikzpicture}         
  \begin{tikzpicture}                                                                          
    \matrix [matrix,column sep=0.05cm,row sep=1.5cm]                         
    {                                                                                       
       \node(blab){\phantom{:}}; 
       &[1cm] \node(blabb){:}; 
       & \node(blabbb){
         \begin{tabular}{ll}
{\small reordering into shallow-first preserving}\\
{\small length and semantics of reductions}
         \end{tabular}};\\
    };                                                                                   
     \draw [->,double] (blab) to (blabb);                                   
  \end{tikzpicture}  
  \caption{Simulation by shallow-first ordering}
  \label{summary}
\end{figure}

The paper is organized as follows.
We start by presenting the language with
multithreading and regions in Section~\ref{sec-language} and define
the largest reduction relation. Then, we introduce a \emph{polynomial depth
  system} in Section~\ref{sec-depth} to control the depth of
program occurrences. Well-formed programs in the depth system  
  follow  Terui's discipline~\cite{Terui07} on the functional side and the \emph{stratification of
  regions} by depth level that we introduced
previously~\cite{MadetA11}. 
We prove in Section~\ref{sec-simulation} that the class of outer-bang strategies
(containing call-by-value) can be simulated by
shallow-first reductions of exactly the same length.
We review the proof of polynomial soundness of the shallow-first
strategy in
Section~\ref{sec-soundness}.
We provide a \emph{polynomial type system} in Section~\ref{sec-types}
which results from a simple decoration of the polynomial depth system
with linear types. We derive the standard subject reduction
proposition and progress proposition which states that well-types
programs reduce to values.  Finally, we
illustrate the expressivity of the type system
in Section~\ref{sec-expr} by showing that it
is polynomially complete in the extensional sense and we give a
programming example of a concurrent iteration producing side
    effects over an inductive data structure.

\section{A modal $\lambda$-calculus\\ with multithreading and regions}
\label{sec-language}
As mentioned previously, the functional core of the language is a
modal $\lambda$-calculus with constructors and destructors for
the modalities `$\oc$' and `$\msec$' that are used to control the
duplication of data. The global store is partitioned into a finite number of
 regions where each region abstracts a set of
 memory locations. Following~\cite{Amadio09}, side effects are
 produced by read and write operators on regions. A parallel operator
allows to evaluate concurrently several terms which can communicate
through regions.  As we shall see in Section~\ref{sec-expr}, this abstract non-deterministic
language entails complexity bounds for languages with concrete
memory locations  representing \emph{e.g.} references, channels or
signals.
 
The syntax of the language is presented in Figure~\ref{language-syntax}.
\begin{figure}[ht] 
$$
\begin{array}{l@{\;\;}rcl}
\textrm{-variables}&\multicolumn{3}{l}{x,y,\ldots}\\
\textrm{-regions}&\multicolumn{3}{l}{r,r',\ldots} \\
\textrm{-terms} & M &::=& x \mid r \mid \star \mid \lambda x. M \mid MM \mid \bang M \mid
\msec M \\ 
&&&\letb{x}{M}{M} \mid \letp{x}{M}{M}\\ 
&&&\get{r} \mid \st{r}{M} \mid (M \para M)\\
\textrm{-stores} &S &::=& \store{r}{M} \mid (S \para S)\\
\textrm{-programs} & P &::=& M \mid S \mid (P \para P)
\end{array}
$$
\caption{Syntax of the language}
\label{language-syntax}
\end{figure}
We have the usual set of variables $x,y,\ldots$ and a set of regions
$r,r',\ldots$ The set of terms $M$ contains variables, regions, the
terminal value (unit) $\star$, $\lambda$-abstractions, applications,
modal terms $\oc M$ and $\msec M$ (resp. called $\oc$-terms and
$\msec$-terms) and the associated $\mathsf{let}\,\oc$-binders and
$\mathsf{let}\,\msec$-binders. We have an operator $\get{r}$
to read a region $r$, an operator $\st{r}{M}$ to assign a term $M$ to
a region $r$ and a parallel operator $(M \para N)$ to evaluate $M$ and
$N$ in parallel. A store $S$ is the composition of several assignments
$\store{r}{M}$ in parallel and a program $P$ is the combination of
several terms and stores in parallel. Note that stores are global, \emph{i.e.}
they always occur in empty contexts.

In the following we write $\dagger$ for $\dagger \in
\set{\oc,\msec}$ and we define $\dagger^0 M = M$ and $\dagger^{n+1} M
= \dagger(\dagger^n M)$. Terms $\lambda x.M$ and $\letd{x}{N}{M}$ bind
occurrences of $x$ in $M$. The set of free variables of $M$ is denoted
by $\fv{M}$. The number of free occurrences of $x$ in $M$ is denoted
by $\fo{x}{M}$. The number of free occurrences in $M$ is denoted by
$\foa{M}$. $\sub{M}{N/x}$ denotes the term $M$ in which each free
occurrence of $x$ has been substituted by $N$.

Each program has an \emph{abstract syntax tree} where variables,
regions and unit constants are
 leaves, $\lambda$-abstractions and $\dagger$-terms have one child, and
 applications and $\mathsf{let}\,\dagger$-binders have two children. An
 example is given in Figure~\ref{trees}.
\begin{figure}[ht]
  \centering
 $$P = \letb{x}{\get{r}}{\st{r}{(\oc x)(\msec x)}} \para
        \store{r}{\oc(\lambda x.x\star)}$$
    \begin{tikzpicture}[font=\scriptsize,level distance=0.65cm]
      \node {$\para^{\epsilon}$}[sibling distance=3cm]
      child {node {$\letbang x^0$}[sibling distance=1cm]
        child {node {$\get{r}^{00}$}}
        child {node {$\mathsf{set}(r)^{01}$}
          child {node {$@^{010}$}
            child {node {$\oc^{0100}$}
              child {node {$x^{01000}$}}}
            child {node {$\msec^{0101}$}
              child {node {$x^{01010}$}}}}}}
      child {node {$r \Leftarrow^{1}$}
        child {node {$ \oc^{10}$}
          child {node {$ \lambda x^{100}$}
            child {node {$@^{1000}$}[sibling distance=1cm]
              child {node {$x^{10000}$}}
              child {node {$\star^{10001}$}}}}}};
    \end{tikzpicture}
\caption{Syntax tree and addresses of $P$}
  \label{trees}
\end{figure}
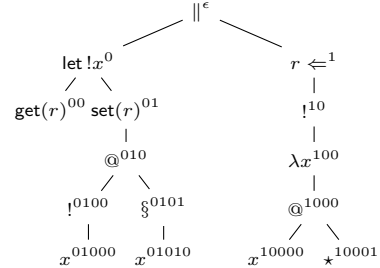
A path starting from the root to a node of the tree denotes an
\emph{occurrence} of the program whose address is a word $w \in
\set{0,1}^{*}$ hereby denoted in exponent form. We write $w
\sqsubseteq w'$ when $w$ is a prefix of $w'$. We denote the number of
occurrences in $P$ by $\size{P}$.

The operational semantics of the language is given in
Figure~\ref{semantics}. In order to prove the later simulation result,  the largest
reduction relation $\reduc$ (which shall contain call-by-value) is presented. 
\begin{figure}[ht]
$$
\begin{array}{rcl}
\multicolumn{3}{c}{\textrm{-structural rules-}}\\
 P\para P' &\equiv &P'\para P                                \\
 (P\para P')\para P'' &\equiv &P \para (P' \para P'')          
\end{array}
$$
$$
  \begin{array}{ r c l}
\multicolumn{3}{c}{\textrm{-evaluation contexts-}}\\
  E&::=&[\cdot]\mid \lambda x.E \mid EM \mid ME \mid \bang E \mid \mathsection E \\
&&  \letb{x}{E}{M} \mid \letp{x}{E}{M}\\
&&  \letb{x}{M}{E} \mid \letp{x}{M}{E}\\
&& \st{r}{E} \mid \store{r}{E} \mid (E\para P) \mid (P\para E)
\end{array}
$$
$$
  \begin{array}{@{}lr @{\;\;}c@{\;\;} l@{}}
\multicolumn{4}{c}{\textrm{-reduction rules-}}\\
    (\beta) &E[(\lambda x.M)N] &\reduc& E[\sub{M}{N/x}]\\
    (\oc) &E[\letm{x}{\oc N}{M}] &\reduc& E[\sub{M}{N/x}]\\
    (\msec) &E[\letp{x}{\msec N}{M}] &\reduc& E[\sub{M}{N/x}]\\
     (\getsf) &E[\get{r}] \para \store{r}{M} &\reduc& E[M] \\
     (\setsf) & E[\st{r}{M}] &\reduc& E[\star] \para \store{r}{M}
     \text{{\scriptsize--if $\fv{M} = \emptyset$}}\\
     (\gcsf) & E[\star \para  M] &\reduc& E[M]
  \end{array}
$$
  \caption{Operational semantics}
  \label{semantics}
\end{figure}

Programs are considered up to a structural equivalence $\equiv$ which
contains the equations for $\alpha$-renaming, commutativity and
associativity of parallel composition. Reduction rules apply modulo
structural equivalence, in an evaluation context $E$ which can be any
program with exactly one occurrence of a special variable `$[\cdot]$',
called the \emph{hole}. We write $E[M]$ for $\sub{E}{M/[\cdot]}$.
Each rule is identified by its name. $(\beta)$ is the usual
$\beta$-reduction. $(\dagger)$ are rules for filtering modal
terms. (\getsf) is for \emph{consuming} a term from a region. (\setsf)
is for \emph{assigning} a closed term to a
region. (\gcsf) is for \emph{erasing} a terminated thread.

First, note that the reduction rule (\setsf) generates a \emph{global} assignment, that
is out of the evaluation context $E$. In turn, we require
$M$ to be closed such that it does not contain variables bound in $E$.
Second,  several terms can be assigned to a single region. This cumulative semantics allows the
simulation of several memory locations by a single region. In turn,
reading a region consists in consuming \emph{non-deterministically} one of
the assigned terms. 

The reduction is very `liberal' with side effects. The contexts
$(P \para E)$ and $(E \para P)$ embed any scheduling of threads. Moreover, 
 contexts of the shape $\store{r}{E}$ allow evaluation in the store as
 exemplified in the following possible reduction:
 \begin{equation*}
 \begin{array}{@{}r@{\;\;}c@{\;\;}l@{}}
 \st{r}{\lambda x.\get{r}}\para \store{r}{M} &\reduc&  \star \para
 \store{r}{\lambda x.\get{r}} \para \store{r}{M}\\
 & \reduc& \star \para \store{r}{\lambda x.M}  
 \end{array}   
 \end{equation*}

In the rules $(\beta),(\dagger),(\gcsf)$, the \emph{redex} denotes the term inside the context of
the left hand-side  and the \emph{contractum}
denotes the term inside the context of the right hand-side. In the
rule $(\getsf)$, the redex is $\get{r}$ and the contractum is $M$. In
the rule $(\setsf)$, the redex is $\st{r}{M}$ and the contractum is $M$.
 Finally, $\preduc$ denotes the
transitive closure of $\reduc$ and $\treduc$ denotes the
reflexive closure of $\preduc$.

\section{A polynomial depth system}
\label{sec-depth}
In this section, we first review 
the principles of well-formed light $\lambda$-terms
(Subsection~\ref{light}) and then the stratification of regions by depth level
(Subsection~\ref{strat}). Eventually we combine the two as a set of inference rules that
 characterizes a class of \emph{well-formed} programs (Subsection~\ref{rules}). 
 
\subsection{On light $\lambda$-terms}
\label{light}

First, we define the notion of depth.
\begin{definition}\label{depth-functional}
  The \emph{depth} $d(w)$ of an occurrence $w$ in a program $P$ is the
  number of $\dagger$ labels that the path leading to the end node
  crosses.  The depth $d(P)$ of program $P$ is the maximum depth of
  its occurrences.
\end{definition}
With reference to Figure~\ref{trees}, $ d(01000) = d(01010) = d(100) =
d(1000) = d(10000) = d(10001) = 1$, whereas other occurrences have
depth $0$. In particular, $d(0100) = d(0101) = d(10)=0$; what matters
in computing the depth of an occurrence is the number of $\dagger$'s
that precede strictly the end node. Thus $d(P) = 1$.  In the sequel,
we say that a program \emph{occurs} at depth $i$ when it corresponds to
an occurrence of
depth $i$. For example, $\get{r}$ occur at depth 0 in
$P$. We write $\reduci{i}$ when the redex occurs at depth $i$; we
write $\size{P}_i$ for the number of occurrences at depth $i$ of $P$.

Then we can define \emph{shallow-first} reductions. 
\begin{definition}
\label{def-std}
A \emph{shallow-first reduction sequence} $P_1 \reduci{i_1} P_2 \reduci{i_2} \ldots 
  \reduci{i_n} P_n$ is  such that  $m < n$ implies $i_m \leq i_n$. A
  \emph{shallow-first strategy} is a strategy that produces shallow-first
   sequences.
\end{definition}
The polynomial soundness of shallow-first strategies relies on the
following properties: when $P \treduci{i} P'$,
\begin{align}
 \label{o1} d(P') &\leq d(P)\\
 \label{o2} \size{P'}_j &\leq \size{P}_j \text{ for } j < i\\
 \label{o3} \size{P'}_i &< \size{P}_i\\
 \label{o4} \size{P'} &\leq \size{P}^2
\end{align}
To see this in a simple way, assume $P$ is a program such that $d(P) = 2$. By 
properties~\eqref{o1},\eqref{o2},\eqref{o3} we can
eliminate all the redexes of $P$ with the shallow-first sequence $P \treduci{0} P'
\treduci{1} P'' \treduci{2} P'''$. By  property~\eqref{o4},
 $\size{P'''} \leq \size{P}^8$. By
properties~\eqref{o3} the length $l$ of the sequence is
such that $l \leq \size{P}
+ \size{P'} + \size{P''} = p$. Since we can show that $p \leq \size{P}^8$ we
conclude that the shallow-first evaluation of $P$ can be computed in polynomial time.

The well-formedness criterions of light $\lambda$-terms are intended
to ensure the above four properties. These criterions can be summarized as follows:
\begin{itemize}
\item $\lambda$-abstraction is affine: in $\lambda x.M$, $x$ may
  occur at most once and at depth $0$ in $M$.
\item $\mathsf{let}\,\oc$-binders are for duplication: in
  $\letb{x}{M}{N}$, $x$ may occur arbitrarily many times and at depth
  $1$ in $N$.
\item $\mathsf{let}\,\msec$-binders are affine: in
  $\letp{x}{M}{N}$, $x$ may occur at most once and at depth $1$ in
  $N$. The depth of $x$ must be due to a $\msec$ modality.
\item a $\oc$-term may contain at most one occurrence of free variable, whereas
  a $\msec$-term can contain many occurrences of free variables.
\end{itemize}
By the first three criterions, we observe the following. The depth of
a term never increases (property~\eqref{o1}) since the reduction rules
$(\beta)$,$(\oc)$ and $(\msec)$ substitute a term for a variable
occurring at the same depth. Reduction rules $(\beta)$ and $(\msec)$
are strictly size-decreasing since the corresponding binders are
affine. A reduction $(\oc$) is strictly size-decreasing at the depth
where the redex occurs but potentially size-increasing at deeper
levels. Therefore properties~\eqref{o2} and~\eqref{o3} are also
guaranteed.  The fourth criterion is intended to ensure
a quadratic size increase (property~\eqref{o4}). Indeed, take the term $Z$ borrowed
from~\cite{Terui07} that respects the first three criterions but not the
fourth:
\begin{equation}
  \label{blowup}
  \begin{split}
Z = \lambda x.\letb{x}{x}{\oc(xx)}\\
\underset{n \text{ times}}{\underbrace{Z\ldots (Z(Z}} \oc y)) \treduc
\oc(\underset{2^n \text{ times}}{\underbrace{yy\ldots y}})
  \end{split}
\end{equation}
It may trigger  an exponential size explosion by repeated application
of the duplicating rule $(\oc)$. The
following term
\begin{align}
\label{blocked}
\begin{split}
        &Y = \lambda x.\letb{x}{x}{\msec(xx)}\\
&\underset{n \text{ times}}{\underbrace{Y\ldots (Y(Y}} \oc y))\\
 \treduc\;&
\underset{n-2 \text{ times}}{\underbrace{Y\ldots (Y
    (Y}}(\letb{x}{\msec(yy)}{\msec(xx)}))) \nrightarrow
\end{split}
\end{align}
respects the four criterions but cannot be used to apply $(\oc)$ exponentially.

\subsection{On the stratification of regions by depth}
\label{strat}
In our
previous work on elementary time~\cite{MadetA11}, we
analyzed the impact of side effects on the depth of occurrences and
remarked that arbitrary reads and writes could increase the depth of 
programs. In the reduction sequence 
\begin{equation}
  \label{bad1}
  \begin{split}
  (\lambda x.\st{r}{x}\para \msec \get{r})\oc M \treduc&\; \msec
  \get{r} \para \store{r}{\oc M}\\ \reduc\;& \msec\oc M    
  \end{split}
\end{equation}
the occurrence $M$ moves from depth $1$ to depth $2$ during the last
reduction step, because the read occurs at depth $0$ while the write
occurs at depth $1$. 

Following this analysis, we introduced \emph{region contexts} in order
to constrain the depth at which side effects occur. A region context
$$
R = r_1:\delta_1,\ldots,r_n:\delta_n
$$
associates a natural number $\delta_i$ to each region $r_i$ in a
finite set of regions $\set{r_1,\ldots,r_n}$ that we write
$dom(R)$. We write $R(r_i)$ for $\delta_i$. Then, the rules of the
elementary depth system were designed in such a way that $\get{r_i}$ and $\st{r_i}{M}$
may only occur at depth $\delta_i$, thus rejecting~\eqref{bad1}.

Moreover, we remarked that since stores are global, that is they always
occur at depth $0$, assigning a term to a region breaks
stratification whenever $\delta_i > 0$. Indeed, in the reduction
\begin{equation}
  \label{bad2}
  \msec \st{r}{M} \reduc \msec\star \para \store{r}{M}  
\end{equation}
where $R(r)$ should be $1$, the occurrence $M$ moves from depth $1$ to depth $0$. Therefore, we
revised the definition of depth as follows.
\begin{definition}
\label{revised-depth}
Let $P$ be a program and $R$ a region context where $dom(R)$ contains all
the regions of $P$.
The \emph{revised depth} $d(w)$ of an occurrence $w$ of $P$ is the number of
$\dagger$ labels that the path leading to the end node crosses, plus
$R(r)$ if the path crosses a store label $r \Leftarrow$. The revised
depth
$d(P)$ of a program $P$ is the maximum revised depth of its occurrences.
\end{definition}
By considering this revised definition of depth, in~\eqref{bad2} the occurrence $M$ stays at 
depth $1$.  In Figure~\ref{trees} we now get $d(01000) = d(01010) = 1$, $d(10) = R(r)$
and $d(100) = d(1000) = d(10000) = d(10001) = R(r) + 1$. Other
occurrences have depth $0$.  From now on we shall say depth for 
 the revised definition of depth.

\subsection{Inference rules}
\label{rules}
Now we introduce the inference rules of the polynomial depth system.
First, we define region contexts $R$  and variable contexts $\Gamma$
as follows:
$$
\begin{array}{rcl}
R &=& r_1:\delta_1,\ldots,r_n:\delta_n\\
\Gamma &=& x_{1}:u_{1},\ldots,x_{n}:u_{n}
\end{array}
$$
Regions contexts are described in the previous subsection. A variable context
associates each variable with a usage $u \in
\set{\uaff,\upara,\ubang}$ which  constrains the variable to be bound
by a $\lambda$-abstraction, a $\mathsf{let}\,\msec$-binder or a
$\mathsf{let}\,\oc$-binder respectively. We write
$\Gamma_u$ if $dom(\Gamma)$ only contains variables with usage $u$.
A depth judgement has the shape
$$R;\Gamma \vdash^\delta P$$
where $\delta$ is a natural number. It  should entail the following:
 \begin{itemize}
 \item if $x:\lambda \in \Gamma$ then
   $x$ occurs at depth $\delta$ in $\dagger^\delta P$,
 \item if $x:\dagger \in \Gamma$ then
   $x$ occurs at depth $\delta+1$ in $\dagger^\delta P$,
\item  if $r:\delta' \in R$ then $\get{r}$/$\mathsf{set}(r)$ occur at depth
$\delta'$ in $\dagger^\delta P$.
 \end{itemize}

The inference rules of the  depth system are presented in
Figure~\ref{depth-system-poly}.
\begin{figure}[ht]
$$
\begin{array}{c}
  \inference
  {x:\uaff \in \Gamma }
  {R;\Gamma  \vdash^\delta x}
\qquad
\inference
{}
{R;\Gamma \vdash^\delta \star} 

\qquad

\inference
{}
{R;\Gamma  \vdash^\delta r}\\ \\ 

  \inference
  {  \fo{x}{M} = 1 
    \\ R;\Gamma,x:\uaff  \vdash^\delta M}
  {R;\Gamma \vdash^\delta \lambda x.M}

  \qquad

  \inference
  {R;\Gamma \vdash^\delta M & R;\Gamma \vdash^\delta
    N}
  {R;\Gamma \vdash^\delta MN}\\ \\

  \inference
  { \foa{M} \leq 1 \\ R;\Gamma_{\uaff} \vdash^{\delta+1} M}
  {R; \Gamma_{\ubang},\Delta_{\upara}, \Psi_{\uaff} \vdash^\delta \oc M}

  \quad

  \inference
  { \fo{x}{N} \geq 1&R;\Gamma \vdash^\delta M \\  R;\Gamma,x:\ubang \vdash^\delta
    N}
  {R;\Gamma \vdash^\delta \letb{x}{M}{N}}\\ \\

  \inference
  { R;\Gamma_{\uaff},\Delta_{\uaff} \vdash^{\delta+1} M}
  { R;\Gamma_{\ubang},\Delta_{\upara},\Psi_{\uaff} \vdash^\delta \msec M}

  \;\;\;

  \inference
  { \fo{x}{N} = 1&R;\Gamma \vdash^\delta M \\ R;\Gamma,x:\upara \vdash^\delta
    N }
  {R;\Gamma \vdash^\delta \letp{x}{M}{N}}\\ \\

\inference
{r:\delta \in R}
{R;\Gamma  \vdash^\delta \get{r}}
\qquad

\inference
{r:\delta \in R & R;\Gamma  \vdash^\delta
M}
{R;\Gamma  \vdash^\delta \st{r}{M}} 

\\\\

\inference
{r:\delta \in R & R;\Gamma  \vdash^\delta M}
{R;\Gamma \vdash^0  \store{r}{M}}

\qquad

\inference
{i=1,2&
  R;\Gamma  \vdash^\delta P_i}
{R;\Gamma  \vdash^\delta (P_1\para P_2)}  
\end{array}
$$
\caption{A polynomial depth system}
\label{depth-system-poly}
\end{figure}
We comment on the handling of usages. Variables are introduced with usage
$\lambda$. The construction of  $\oc$-terms updates the usage of
variables to $\oc$ if they all previously had usage $\lambda$. The construction of $\msec$-terms updates the usage of
variables to $\msec$ for one part and $\oc$ for the other part if they
all previously had usage $\lambda$. In both constructions, 
contexts with other usages can be weakened. As a result,
$\lambda$-abstractions bind variables occurring at depth $0$,
 $\mathsf{let}\,\oc$-binders bind variables occurring at depth $1$ in 
$\oc$-terms or $\msec$-terms, and  $\mathsf{let}\,\msec$-binders bind variables occurring at depth $1$ in
 $\msec$-terms.

To control the duplication of data, the rules for binders have predicates which specify how many
occurrences can be bound. $\lambda$-abstractions
and $\mathsf{let}\,\msec$-binders are linear by  predicate
$\fo{x}{M} = 1$ and $\mathsf{let}\,\oc$-binders are at least linear by
 predicate $\fo{x}{M} \geq 1$. 

The depth $\delta$ of the judgement is decremented when constructing
$\dagger$-terms. This allows to stratify regions by depth level by requiring
that $\delta = R(r)$ in the rules for $\get{r}$ and $\st{r}{M}$. A store assignment
$\store{r}{M}$ is global hence its judgement has depth $0$ whereas the premise
has depth $R(r)$ (this reflects the revised
notion of depth).


\begin{definition}\textit{(Well-formedness)}
\label{def-well-formed}
A program $P$ is \emph{well-formed} if 
a judgement $R;\Gamma \vdash^\delta P$ can be derived for some $R$, $\Gamma$ and $\delta$.
\end{definition}

\begin{example}
The program $P$ of Figure~\ref{trees} is
        well-formed by composition of the two derivation trees of Figure~\ref{derivtrees}.
\begin{figure*}[ht]
  \centering
  \begin{tabular}{@{}c@{}c@{}}
    \begin{minipage}{0.63\linewidth}
  \begin{prooftree}
  \AxiomC{}
  \UnaryInfC{$r:0;- \vdash^0 r$}
  \UnaryInfC{$r:0;- \vdash^0 \get{r}$}
  \AxiomC{}
  \UnaryInfC{$r:0;x:\ubang \vdash^0 r$}
  \AxiomC{}
  \UnaryInfC{$r:0;x:\lambda\vdash^1 x $}
  \UnaryInfC{$r:0;x:\ubang\vdash^0 \oc x$}
  \AxiomC{}
  \UnaryInfC{$r:0;x:\lambda\vdash^1 x $}
  \UnaryInfC{$r:0;x:\ubang\vdash^0 \msec x $}
  \BinaryInfC{$r:0;x:\ubang\vdash^0 \oc x \msec x$}
  \BinaryInfC{$r:0; x:\ubang \vdash^0 \st{r}{\oc x \msec x}$}
  \BinaryInfC{$r:0;- \vdash^0 \letb{x}{\get{r}}{\st{r}{\oc
        x\msec x}}$}
\end{prooftree}      
    \end{minipage}
 &
     \begin{minipage}{0.37\linewidth}
\begin{prooftree}
    \AxiomC{}
   \UnaryInfC{$r:0;x:\lambda \vdash^{1} x$}
    \AxiomC{}
   \UnaryInfC{$r:0;x:\lambda \vdash^{1} \star$}
   \BinaryInfC{$r:0;x:\lambda \vdash^{1} x\star$}
   \UnaryInfC{$r:0;- \vdash^{1} \lambda x.x\star$}
   \UnaryInfC{$r:0;- \vdash^0 \oc (\lambda x.x\star)$}
  \UnaryInfC{$r:0;- \vdash^0 \store{r}{\oc (\lambda x.x\star)}$}
\end{prooftree}    
     \end{minipage}

  \end{tabular}

  \caption{Derivation trees}
  \label{derivtrees}
\end{figure*}
The program $Z$ given in~\eqref{blowup} is not well-formed.
\end{example}

The depth system is strictly linear in the sense that it is not
possible to bind $0$ occurrences. We shall see in
Section~\ref{sec-simulation} that it allows for a major simplification
of the proof of simulation.  However, this impossibility to discard data is a
notable restriction over light $\lambda$-terms. In a call-by-value
setting, the sequential composition $M;N$ is usually encoded as the
non well-formed term $(\lambda z.N)M$ where $z \notin \fv{N}$ is used
to discard the terminal value of $M$. We show that side effects can be
used to simulate the discarding of data even though the depth system
is strictly linear. Assume that we dispose of a specific region $gr$
collecting `garbage' values at each depth level of a program. Then
$M;N$ could be encoded as the well-formed program $(\lambda
z.\st{gr}{z}\para N)M$. Using a call-by-value semantics, we would
observe the following reduction sequence
\begin{align*}
M;N \treduc V;N \reduc \st{gr}{V} \para N &\reduc \star \para N \para
\store{gr}{V}\\ 
&\reduc N \para \store{gr}{V}
\end{align*}
where $\star$ has been erased by (\gcsf) and $V$ has been garbage collected into $gr$.

Finally we derive the following lemmas on the depth system in order to
get the subject reduction proposition.
\begin{lemma}[Weakening and Substitution]\hfill
  \label{substitution}
  \begin{enumerate}
  \item   \label{weakening}   If $R;\Gamma \vdash^\delta P$ then $R;\Gamma,\Gamma'\vdash^\delta P$.
  \item\label{sub1}   If $R;\Gamma,x:\uaff \vdash^{\delta} M$ and
  $R;\Gamma \vdash^\delta N$\\ then $R;\Gamma \vdash^\delta
  \sub{M}{N/x}$.
  \item\label{sub2}   If $R;\Gamma,x:\upara \vdash^{\delta} M$ and
  $R;\Gamma \vdash^\delta \msec N$\\ then $R;\Gamma \vdash^\delta
  \sub{M}{N/x}$.
  \item\label{sub3}   If $R;\Gamma,x:\ubang \vdash^{\delta} M$ and
  $R;\Gamma \vdash^\delta \oc N $\\ then $R;\Gamma \vdash^\delta
  \sub{M}{N/x}$.
  \end{enumerate}
\end{lemma}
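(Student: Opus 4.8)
The plan is to prove the four statements by induction on the derivation of the first judgement in each, and to establish weakening before the three substitution statements, since the latter invoke it to realign the context of $N$ with that of the subterm into which it is substituted. For weakening I would induct on $R;\Gamma\vdash^\delta P$. The three axioms hold in an arbitrary context, so adding $\Gamma'$ is immediate, and the application, $\mathsf{let}\,\dagger$, $\getsf$, $\setsf$, store and parallel rules simply thread $\Gamma'$ through their premises. The only delicate case is the modal rules: their conclusion already exposes freely weakenable zones $\Delta_{\upara}$ and $\Psi_{\uaff}$, so a new variable of usage $\upara$ or $\uaff$ is absorbed there, while a new variable of usage $\ubang$ is added to the premise as a $\uaff$ variable --- it does not occur in the body, so the predicate $\foa{M}\leq1$ is untouched --- and reappears in the conclusion with usage $\ubang$.

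For the affine substitution ($x:\uaff$) the governing observation is that a variable of usage $\uaff$ never occurs inside a modal box: in both the $\oc$ and the $\msec$ rule a conclusion variable of usage $\uaff$ lies in the weakened zone $\Psi_{\uaff}$, which is absent from the premise. Thus every occurrence of $x$ sits at the ambient depth $\delta$ and is reached only through the variable axiom. Inducting on $R;\Gamma,x:\uaff\vdash^\delta M$, the base case $M=x$ returns the hypothesis $R;\Gamma\vdash^\delta N$; for a modal subterm $x$ does not occur and the substitution is vacuous there; every other case distributes the substitution over the immediate subterms and reapplies the same rule, reusing the derivation of $N$ when $x$ occurs several times (as in an application).

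The paragraph and bang substitutions bootstrap off the affine one. A variable of usage $\upara$ (resp. $\ubang$) can only be consumed after being turned into an affine variable at depth $\delta+1$ inside a $\msec$-box (resp. a $\oc$- or a $\msec$-box), since the variable axiom admits usage $\uaff$ only. I would invert the hypothesis $R;\Gamma\vdash^\delta\msec N$ (resp. $R;\Gamma\vdash^\delta\oc N$) to obtain a typing of $N$ at depth $\delta+1$ in an affine context, together, in the bang case, with $\foa{N}\leq1$. Inducting on the derivation of $M$, the only interesting step is the modal box that actually carries $x$: inside it $x$ has usage $\uaff$ at depth $\delta+1$, so I apply the already-proved affine substitution at depth $\delta+1$, aligning the context of $N$ with that of the box body by weakening, and then reapply the modal rule.

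The difficulty is concentrated in these modal steps. First, the modal rules must be invertible enough to recover the depth-$(\delta+1)$ typing of $N$ in the right affine context; this holds because the split of a conclusion context into $\Gamma_{\ubang},\Delta_{\upara},\Psi_{\uaff}$ is forced by the usages. Second, the free-occurrence side condition of the $\oc$ rule must be preserved under substitution: within a single $\oc$-box the body satisfies $\foa{\cdot}\leq1$, so $x$ occurs there at most once, and replacing that occurrence by $N$ keeps the count at most $1$ exactly because the hypothesis provides $\foa{N}\leq1$; inside $\msec$-boxes there is no such predicate, so the several occurrences of a $\ubang$ variable are harmless and are discharged copy by copy by the affine substitution.
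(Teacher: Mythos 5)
Your proposal is correct and follows essentially the same route as the paper: the paper's detailed proof (given in the appendix for the typed refinement, Lemma~\ref{substitution-types}) is likewise an induction on the derivation of the judgement for $M$, with the $\upara$ and $\ubang$ cases reduced to the affine case at depth $\delta+1$ by inverting the derivation of $\msec N$ (resp.\ $\oc N$), weakening to realign contexts, and checking that the $\foa{\cdot}\leq 1$ and $\fo{y}{\cdot}$ side conditions survive substitution. The key structural observations you isolate --- that a $\uaff$ variable never crosses a modal box and that $\upara$/$\ubang$ variables are consumed only inside boxes where they reappear with usage $\uaff$ --- are exactly the ones the paper's case analysis relies on.
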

\begin{proposition}[Subject reduction]
\label{subred} If $R;\Gamma
  \vdash^\delta P$ and $P\reduc P'$ then $R;\Gamma
  \vdash^\delta P'$ and $d(P) \geq d(P')$.
\end{proposition}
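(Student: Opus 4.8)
The plan is to prove both conclusions together by analysing which reduction rule fired, the work being localised by a \emph{context replacement lemma} proved by induction on the evaluation context $E$. Two preliminaries come first. (i) Structural equivalence preserves judgements and depth: the $\para$-rule is symmetric and uses the \emph{same} $R,\Gamma,\delta$ for both components, and both $\alpha$-renaming and $d(\cdot)$ are insensitive to reordering of parallel components. (ii) The replacement lemma: if $R;\Gamma \vdash^\delta E[Q]$, then the derivation contains at the hole a sub-derivation $R;\Gamma' \vdash^{\delta'} Q$, where $\delta'$ is $\delta$ augmented by the number of $\dagger$'s crossed on the path to the hole (plus $R(r)$ if the path enters a store label); and conversely, grafting any derivation of $R;\Gamma' \vdash^{\delta'} Q'$ in its place yields $R;\Gamma \vdash^\delta E[Q']$, provided $\foa{Q'}\le\foa{Q}$ so that the side conditions of the modal rules on the path survive.

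With this in hand, each non-store rule reduces to exhibiting the same judgement for the contractum. For $(\beta)$, inverting the application and $\lambda$-rules gives $R;\Gamma'\vdash^{\delta'}N$ and $R;\Gamma',x{:}\uaff\vdash^{\delta'}M$ with $\fo{x}{M}=1$, and Lemma~\ref{substitution}.\ref{sub1} yields $R;\Gamma'\vdash^{\delta'}M[N/x]$; the rule $(\oc)$ uses part~\ref{sub3} after inverting the $\mathsf{let}\,\oc$- and $\oc$-rules, and $(\msec)$ uses part~\ref{sub2}. In each case a direct count gives $\foa{M[N/x]}=\foa{\text{redex}}$: the single-occurrence predicates for $\uaff$ and $\upara$ make this an equality, and for $(\oc)$ the fact that a modal body is typed under an all-$\uaff$ context forces the duplicated $N$ to be closed, so again nothing is gained. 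Hence grafting is legitimate.

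The store rules exploit the global character of $\store{r}{M}$: any program containing a store is typed at depth $0$, and the store rule types $\store{r}{M}$ from a premise $R;\Gamma\vdash^{R(r)}M$ at exactly the stratified level $R(r)$. For $(\getsf)$, inverting the top-level $\para$ gives $R;\Gamma\vdash^0 E[\get{r}]$ and $R;\Gamma\vdash^{R(r)}M$; the get-rule pins the hole depth to $\delta'=R(r)$, so $M$ (closed, by the $(\setsf)$ side condition $\fv{M}=\emptyset$) is grafted into the hole via Lemma~\ref{substitution}.\ref{weakening}. For $(\setsf)$ the hole holds $\st{r}{M}$ at depth $R(r)$; replacing it by $\star$ is immediate, and the extruded $\store{r}{M}$ is re-typed at depth $0$ from the same premise, the \emph{revised} depth keeping $M$ at level $R(r)$ in both positions. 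The rule $(\gcsf)$ merely discards a $\star$ and is trivial.

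The inequality $d(P)\ge d(P')$ I would establish occurrence-wise, using that every rule keeps redex and contractum rooted at the common depth $\delta'$. In $(\beta)$, $(\oc)$, $(\msec)$ the term $N$ lands exactly where $x$ occurred, and the usages guarantee that $x$ and the body of $\oc N$ (resp.\ $\msec N$) sit at the same depth, so no copy of $N$ is pushed deeper, while erasing the binder and application nodes can only lower depth; for the store rules the stratification of regions was introduced precisely so that $M$ retains revised depth $R(r)$ whether under the read/write site in $E$ or under the global store label. I expect the genuine obstacle to be the store cases of the replacement lemma, namely reconciling the depth-$0$ typing of the consumed or extruded store with the depth-$R(r)$ position of the redex inside $E$, together with checking that the $\foa{\cdot}\le 1$ conditions of the modal rules on the path are preserved under grafting; this succeeds only because the strictly linear, stratified discipline forces the subterms that could otherwise be duplicated to be closed.
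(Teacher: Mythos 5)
Your overall architecture is the one the paper uses: structural equivalence preserves judgements, a context-replacement lemma locates the redex at a hole of computable depth, and a case analysis on the fired rule appeals to Lemma~\ref{substitution} (the appendix carries this out for the typed refinement via lemmas on structural equivalence, on evaluation contexts and typing, and on functional and side-effect redexes). However, the side condition you attach to the replacement lemma, $\foa{Q'}\le\foa{Q}$, is the wrong invariant, and the argument you give to discharge it for the rule $(\oc)$ is false. The premise of the $\oc$-rule types the body $N$ of $\oc N$ in an all-$\uaff$ context, but this does not force $N$ to be closed: it forces the free variables of $N$ to carry usage $\ubang$ in the context of $\oc N$. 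Since $\letb{x}{\oc N}{M}$ only demands $\fo{x}{M}\ge 1$, the contractum $\sub{M}{N/x}$ may contain \emph{strictly more} free occurrences than the redex. Concretely, $\oc(\letb{y}{\oc\star}{\letb{x}{\oc y}{\msec(xx)}})$ is well-formed and reduces by $(\oc)$ to $\oc(\letb{y}{\oc\star}{\msec(yy)})$: the contractum $\msec(yy)$ has two free occurrences of $y$ where the redex had one. Your condition would wrongly forbid regrafting here; and in the other direction it is too weak, since losing an occurrence of a $\uaff$- or $\upara$-bound variable satisfies $\foa{Q'}\le\foa{Q}$ yet breaks the exact-count predicates $\fo{x}{\cdot}=1$ of $\lambda$ and $\letsec$ sitting on the path above the hole.

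The correct invariant — which is what the paper's evaluation-context lemma states — is per variable and keyed on usage: for each variable $x$ bound by the context between the root and the hole, require $\fo{x}{Q'}=\fo{x}{Q}$ when the usage of $x$ is not $\oc$, and $\fo{x}{Q'}\ge\fo{x}{Q}$ when it is $\oc$. One then checks each rule against it: for $(\beta)$ and $(\msec)$ the bound variable occurs exactly once, so all counts are preserved; for $(\oc)$ the only counts that can grow are those of variables free in $N$, which the $\oc$-rule forces to have usage $\ubang$, and growth is harmless for the predicate $\fo{x}{\cdot}\ge 1$ of $\letbang$ and irrelevant to the $\foa{\cdot}\le 1$ predicate of any enclosing $\oc$-term, because such variables are necessarily bound strictly below it. The remainder of your argument — the treatment of the store rules through stratification and the revised depth, and the occurrence-wise proof of $d(P)\ge d(P')$ — is sound and agrees with the paper.
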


\section{Simulation by shallow-first}
\label{sec-simulation}
In this section, we first explain why we need a class of \emph{outer-bang}
 reduction strategies (Subsection~\ref{secouter}). Then, we prove that shallow-first
simulates any outer-bang strategy and that the result applies to call-by-value (Subsection~\ref{sim-sf}).

\subsection{Towards outer-bang strategies}
\label{secouter}
Reordering a reduction sequence into a shallow-first one is an iterating
process where each iteration consists in commuting two consecutive
reduction steps which are applied in `deep-first' order. 

First, we show that this process requires a reduction  which is strictly larger
than an usual call-by-value relation. Informally, assume $\dagger V$ denotes a
value. The following two reduction steps in call-by-value style
    \begin{equation*}
      \begin{split}
      \st{r}{\dagger M} \reduci{1}\st{r}{\dagger V}
      \reduci{0}   \star \para \store{r}{\dagger V}        
      \end{split}
    \end{equation*}
commute into the shallow-first sequence
    \begin{equation*}
      \begin{split}
      \st{r}{\dagger M} \reduci{0} \star \para \store{r}{\dagger M}
      \reduci{1}   \star \para \store{r}{\dagger V}        
      \end{split}
    \end{equation*}
which is obviously not  call-by-value: first, we write a non-value
$\dagger M$ to the store and second we reduce \emph{in} the store! As
another example, the following two reduction steps in
call-by-value style
\begin{equation*}
(\lambda x.\lambda y.xy)\!\dagger\!M \reduci{1} (\lambda x.\lambda y.xy)\!\dagger\! V
\reduci{0} \lambda y.(\dagger V)y  
\end{equation*}
commute into the shallow-first sequence
\begin{equation*}
(\lambda x.\lambda y.xy)\!\dagger\! M \reduci{0} \lambda y.(\dagger M)y \reduci{i}
\lambda y.(\dagger V)y  
\end{equation*}
which is  not call-by-value: we need to reduce inside a
$\lambda$-abstraction and this is not compatible with the usual notion of value.

Second, we show that an arbitrary relation like $\reduc$ is too large to be simulated
by shallow-first sequences. For instance, consider the following
reduction of a well-formed program:
\begin{equation}
  \label{bangreduc}
  \begin{split}
    &\letb{x}{\oc \get{r}}{\msec(xx)} \para \store{r}{M}\\
    \reduci{1}\;&\letb{x}{\oc M}{\msec(xx)}\\ 
    \reduci{0}\;& \msec(MM)  
  \end{split}
\end{equation}
This sequence is deep-first; it can be reordered into a
shallow-first one as follows:
\begin{equation}
\label{dupget}
\begin{split}
  &\letb{x}{\oc \get{r}}{\msec(xx)} \para \store{r}{M}\\
  \reduci{0}\;& \msec(\get{r}\get{r}) \para \store{r}{M}\\
  \reduci{1}\;& \msec(M\get{r}) \nrightarrow
\end{split}
\end{equation}
However, the sequence cannot be confluent with the previous one for we
try to read the region two times by duplicating the redex $\get{r}$. It turns out that a non shallow-first strategy
may require exponential time in the presence of side effects. Consider
the well-formed
$\lambda$-abstraction
\begin{equation*}
F = \lambda  x.\letp{x}{x}{\msec\st{r}{x} ; \oc \get{r}}  
\end{equation*}
which transforms a $\msec$-term into a $\oc$-term (think of the type
$\msec A \multimap \oc A$ that would be rejected in \lightll). Then, 
building on program $Z$ given in~\eqref{blowup}, take
\begin{equation*}
Z' = \lambda x.\letb{x}{x}{F\msec{(xx)}}
\end{equation*}
We observe an exponential explosion of the size of the
following well-formed 
program:
\begin{equation*}
\begin{split}
  &\underset{n \text{ times}}{\underbrace{Z'Z'\ldots Z'}} \oc \star\\ 
  \treduc\;&  \underset{n-1 \text{ times}}{\underbrace{Z'Z'\ldots
      Z'}}(F\msec{(\star\star)})\\
  \treduc\;& \underset{n-1 \text{
      times}}{\underbrace{Z'Z'\ldots Z'}}(\oc{(\star\star)}) \para
  \store{gr}{\msec\star}\\
  \treduc\;&
  \underset{ 2^n \text{ times}}{\oc(\underbrace{\star\star\ldots
      \star})} \para 
  \underset{n \text{ times}}{\underbrace{\store{gr}{\msec\star} \para \ldots \para \store{gr}{\msec\star}}}
\end{split}
\end{equation*}
where $gr$ is a region collecting the garbage produced by the sequential
composition operator of $F$. This previous sequence is not
shallow-first since the redexes $\st{r}{M}$ and $\get{r}$ occurring at depth
$1$ are alternatively applied with other redexes occurring at depth $0$. A shallow-first strategy would produce the
 reduction sequence
\begin{equation*}
\underset{n \text{ times}}{\underbrace{Z'Z'\ldots Z'}} \oc \star
\treduc \oc(\star\star\underset{n-1 \text{
    times}}{\underbrace{\get{r}\get{r}\ldots\get{r}}}) \para S
\end{equation*}
where $S$ is the same garbage store as previously but we observe no
size explosion.

Following these observations, our contribution is to identify an intermediate \emph{outer-bang} reduction relation
 that can be simulated by shallow-first sequences. The keypoint is to
 prevent reductions inside $\oc$-terms  like in
sequence~\eqref{bangreduc}.
For this, we define the \emph{outer-bang} evaluation
  contexts $F$ in Figure~\ref{linear-ctxts}.
\begin{figure}[ht]
\begin{equation*}
\begin{array}{rcl}
F &::=& [\cdot] \mid \lambda x.F \mid FM \mid MF \mid \msec F\\
&&
\letd{x}{F}{M} \mid \letd{x}{M}{F}\\
  &&\st{r}{F} \mid (F \para M)
\mid (M \para F) \mid  \store{r}{F}  
\end{array}
\end{equation*}
\caption{Outer-bang evaluation contexts}
\label{linear-ctxts}
\end{figure}
They are not decomposable in a
context of the shape $E[\oc E']$ and thus cannot be used to reduce in $\oc$-terms. In the sequel, $\reducf$
denotes reduction modulo evaluation contexts $F$.
\subsection{Simulation of outer-bang strategies}
\label{sim-sf}
After identifying a proper outer-bang relation $\reducf$,
the main difficulty is to preserve the
evaluation order of side effects by shallow-first reordering. For
example, the following two reduction steps do not commute:
\begin{equation}
  \label{separated}
  \begin{split}
      &F_1[\st{r}{Q}] \para F_2[\get{r}]\\
      \reduci{i}\;& F_1[\star] \para
  F_2[\get{r}] \para \store{r}{Q}\\
  \reduci{j}\;& F_1[\star] \para F_2[Q]
  \end{split}
\end{equation}
We claim that this is not an issue since the depth system
enforces that side effects on a given region can only occur at fixed depth,
hence that $i = j$. Therefore, we should never need to `swap' a read
with a write on the same region.

We can prove the following crucial lemma.
\begin{lemma}[Swapping]
  \label{swap}
  Let $P$ be a well-formed program such that \mbox{$P \reducif{i} P_1 \reducif{j} P_2$}
  and $i > j$. Then, there exists $P'$ such that \mbox{$P \reducif{j}P' \reducif{i} P_2$}.
\end{lemma}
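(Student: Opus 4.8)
The plan is to prove this local commutation by a positional analysis of the two contracted redexes, exploiting three features of the depth system: depth is non-increasing under reduction (Proposition~\ref{subred}), the system is strictly linear (no binder discards an occurrence), and the side effects on a region $r$ are confined to the single depth $R(r)$. Write $\Delta_1$ for the redex contracted in $P \reducif{i} P_1$ and $\Delta_2$ for the one contracted in $P_1 \reducif{j} P_2$. First I would show that a $\reducf$-step at depth $i$ can only produce new redexes at depth $\ge i$: inspecting each rule, the contractum and every term relocated by a substitution, by a $\getsf$, or by a $\setsf$ lands at depth $\ge i$, while a redex freshly exposed at the hole of the context sits exactly at the depth $i$ of that hole. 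Since $j < i$, the redex $\Delta_2$ is therefore not created by the first step; it is the residual of a redex $\Delta_2^P$ already present in $P$ at some address $w_2$ of depth $j$. Let $w_1$ be the address of $\Delta_1$ in $P$.

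Next I would check that the first step neither erases nor consumes $\Delta_2^P$. Strict linearity forbids $(\beta)$ and $(\msec)$ from discarding a subterm and prevents $(\oc)$ from deleting one, while $(\gcsf)$ erases only $\star$; hence no reduction can delete the subterm carrying $\Delta_2^P$. A $\getsf$ or $\setsf$ could a priori interfere, but by stratification a read and its matching write both live at depth $R(r)$, so they cannot be paired across the strict gap $i > j$: thus $\Delta_1$ neither supplies nor consumes the store feeding $\Delta_2^P$. Because depth is non-decreasing along any root-to-node path, $w_2$ cannot lie strictly below $w_1$ (that would force $j \ge i$). This leaves exactly two configurations: $w_1$ and $w_2$ incomparable, or $w_1$ strictly below $w_2$.

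When $w_1$ and $w_2$ are incomparable the two contractions act on parallel subtrees and commute immediately; the only possible coupling, through the global store, is excluded by the stratification argument above, so reducing $\Delta_2^P$ first and then the unmoved $\Delta_1$ yields exactly $P_2$. When $w_1$ lies strictly below $w_2$, the redex $\Delta_1$ sits inside $\Delta_2^P$; since their depths differ there is at least one modality between them, and because $\Delta_1$ is an outer-bang redex that modality is a $\msec$ (or a store label), never a $\oc$. I would then run through the shapes of $\Delta_2^P$. The decisive point is that $\Delta_1$ is never duplicated: the only rule that copies material is $(\oc)$, which duplicates the content $N'$ of a bang $\oc N'$, and the outer-bang restriction forbids $\Delta_1$ from lying inside $\oc N'$; every other binder is linear, so $\Delta_1$ has exactly one residual. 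Reducing $\Delta_2^P$ first thus leaves a single copy of $\Delta_1$, whose trigger (a $\lambda$, a $\dagger$, a $\get{r}$ or a $\st{r}{\cdot}$) is intrinsic and so unaffected by the substitution, and which by the depth computation of the first paragraph still occurs at depth $i$; contracting it recovers $P_2$, giving $P \reducif{j} P' \reducif{i} P_2$.

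The main obstacle is this nested case together with the bookkeeping of the global store: I must guarantee that reducing the shallow redex first can neither duplicate nor destroy $\Delta_1$, and that no read/write dependency links the two steps. Both are resolved by the conjunction of outer-bang evaluation, which keeps $\Delta_1$ out of every $\oc$-term so that the one duplicating rule never reaches it, and of strict linearity together with region stratification, which forbid erasure and cross-depth store pairing. Verifying that these properties genuinely cover every reduction rule, and that the residual of $\Delta_1$ remains at depth $i$ after the commuted first step, is the technical heart of the argument; the incomparable case and the affine sub-cases are then routine.
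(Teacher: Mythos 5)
Your proposal is correct and follows essentially the same route as the paper's proof: the same three-way positional case split (incomparable, shallow-inside-deep ruled out by the depth argument, deep-inside-shallow handled by linearity), with region stratification forcing $i=j$ for any get/set coupling and the outer-bang restriction plus strict linearity guaranteeing that the deep redex is neither duplicated nor erased. You are somewhat more systematic than the paper (which only details one ``crucial subcase'' per case, e.g.\ the $\letp{x}{\msec R}{Q}$ instance), but the key ideas coincide.
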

\begin{proof}
  We write $M$  the contractum of the reduction $P \reducif{i} P_1$
  and $N$ the redex of the reduction $P_1 \reducif{j} P_2$. Assume they
  occur at addresses $w_m$ and $w_n$
   in $P_1$. We distinguish three cases: (1)
   $M$ and $N$ are separated (neither $w_m \sqsubseteq w_n$ nor $w_m
   \sqsupseteq w_n$); (2) $M$ contains $N$ ($w_m \sqsubseteq w_n$); (3) $N$ strictly contains $M$ ($w_m \sqsupseteq w_n$ and $w_m \neq w_n$).
   For each of them we discuss a crucial subcase:
  \begin{enumerate}
  \item
    Assume $M$ is the contractum of a (\setsf) rule and that $N$
    is the redex of a (\getsf) rule related to the same region. This
    case has been introduced in example ~\eqref{separated} where $M$
    and $N$ are separated by a parallel node.
    By well-formedness of $P$, the redexes $\get{r}$ and $\st{r}{Q}$
    must occur at the same depth, that is $i=j$, and we conclude that
    we do not need to swap the reductions.
  \item If the contractum $M$ contains the redex $N$, $N$
    may not exist yet in $P$ which makes the swapping impossible.
    We remark  that, for any well-formed
    program $Q$ such that $Q \reducif{d} Q'$, both the redex and the
    contractum  occur at depth $d$. In particular,
    this is true when a contractum occurs in the store as follows: $$Q =
    F[\st{r}{T}] \reducif{d} Q'= F[\star] \para \store{r}{T}$$ By
    well-formedness of $Q$, there exists a region context $R$ such
    that $R(r)=d$ and the redex $\st{r}{T}$ occurs at depth $d$. By
    the revised definition of depth, the contractum $T$ occurs at depth
    $d$ in the store. As a result of this remark, $M$ occurs
    at depth $i$ and $N$ occurs at depth $j$. Since $i > j$, it is
    clear that the contractum $M$ cannot contain the redex $N$ and this
     case is void.
    \item  Let $N$ be the redex $\letp{x}{\msec R}{Q}$ and
      let the contractum $M$ appears in $R$ as in the following
      reduction sequence
      \begin{equation*}
        \begin{split}
      &P = F[\letp{x}{\msec R'}{Q}]\\
      \reducif{i}\;&  P_1 =  F[\letp{x}{\msec R}{Q}]\\
      \reducif{j}\;& P_2 = F[\sub{Q}{R/x}]        
        \end{split}
      \end{equation*}
      By well-formedness, $x$ occurs exactly once in $Q$. This implies
       that applying first $P \reduci{j} P'$ cannot discard
      the redex in $R'$. Hence, we can produce the following shallow-first sequence of the same
      length:
      \begin{equation*}
        \begin{split}
      P = F[\letp{x}{\msec R'}{Q}] \reducif{j}\;&
      P'=F[\sub{Q}{R'/x}]\\
      \reducif{i}\;& P_2 = F[\sub{Q}{R/x}]
        \end{split}
      \end{equation*}
      Moreover, the reduction $P' \reducif{i} P_2$ must be outer-bang
      for $x$ cannot occur in a $\oc$-term in $Q$.\qedhere
  \end{enumerate}
\end{proof}
There are two notable differences with
Terui's swapping procedure. First, our procedure returns sequences of exactly the
same length as the original ones while his may return longer sequences. The reason is
that outer-bang contexts force
redexes to be duplicated before being reduced, as in
reduction~\eqref{dupget}, hence our swapping procedure cannot lengthen
sequences more. The other difference is that his calculus is affine
whereas ours is strictly linear. Therefore his procedure might shorten
sequences by discarding redexes and this breaks the argument for \emph{strong} polynomial
termination. His solution is to
introduce an auxiliary calculus with explicit discarding for which
swapping lengthens sequences. This is at
the price of introducing commutation rules which require quite a lot
of extra work to obtain the simulation result. We conclude that
strict linearity brings major proof simplifications while we have seen
it does not cause a loss of expressivity if we use garbage collecting regions.

Using the swapping lemma, we show that any reduction sequence that uses
outer-bang evaluation contexts can be simulated by a shallow-first sequence.
  \begin{proposition}[Simulation by shallow-first]
    \label{prop-std}
    To any reduction sequence $P_1 \treducf P_n$ corresponds
     a shallow-first reduction sequence $P_1
    \treducf P_n$ of the same length.
  \end{proposition}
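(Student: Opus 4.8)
The plan is to prove the simulation result by a standard "bubble-sort" argument on reduction sequences, using the Swapping Lemma (Lemma~\ref{swap}) as the elementary exchange step. I would proceed by induction on a suitable well-founded measure that counts how far a given sequence is from being shallow-first, and show that each application of the Swapping Lemma strictly decreases this measure while preserving both the endpoints $P_1, P_n$ and the length $n$ of the sequence.

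Let me spell out the measure. Given a sequence $P_1 \reducif{i_1} P_2 \reducif{i_2} \cdots \reducif{i_{n-1}} P_n$, define the number of \emph{inversions} to be the number of pairs $(m,k)$ with $m < k$ but $i_m > i_k$; equivalently, one can use $\sum_{m} \#\{k > m : i_m > i_k\}$. A sequence is shallow-first precisely when this count is zero. If the sequence is not yet shallow-first, there must exist some \emph{adjacent} inversion, i.e.\ an index $m$ with $i_m > i_{m+1}$ (this is the familiar fact that a non-sorted list of depths always has a descending adjacent pair). At that position we have $P_m \reducif{i_m} P_{m+1} \reducif{i_{m+1}} P_{m+2}$ with $i_m > i_{m+1}$, so the Swapping Lemma applies (its hypotheses are exactly $i > j$ with a well-formed starting program $P_m$, which holds by Subject Reduction, Proposition~\ref{subred}, since well-formedness is preserved along the sequence). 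The lemma yields $P'$ with $P_m \reducif{i_{m+1}} P' \reducif{i_m} P_{m+2}$. Replacing the two steps in place produces a new sequence of \emph{exactly} the same length, with the same endpoints, and with the depths at positions $m, m+1$ swapped into increasing order.

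First I would verify that this local swap strictly decreases the inversion count: the pair $(m,m+1)$ ceases to be an inversion, and no other pair's inversion status changes, because the multiset of depths at positions $m$ and $m+1$ is unchanged (only their order within those two slots is exchanged) and all other indices $i_k$ are untouched. Hence the total number of inversions drops by exactly one. Since the inversion count is a non-negative integer, iterating this process terminates, and it terminates precisely at a shallow-first sequence. The invariants --- same first and last program, same length --- are maintained at every step, so the final sequence is a shallow-first reduction $P_1 \treducf P_n$ of the same length as the original, which is the claim.

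The main obstacle is not the combinatorics of the sorting argument, which is routine, but ensuring that the Swapping Lemma is genuinely applicable at every adjacent inversion encountered throughout the iteration. This requires two things: that each intermediate program remains well-formed (so that Lemma~\ref{swap} may be invoked), which follows from Subject Reduction applied inductively along the sequence; and that every step throughout remains an outer-bang step $\reducf$ --- note the Swapping Lemma's conclusion in case (3) explicitly certifies that the swapped step is still outer-bang, and cases (1) and (2) either do not arise or impose $i=j$ (so no swap is needed). I would therefore make explicit at the outset that well-formedness and the outer-bang property are preserved as loop invariants, so that the hypotheses of Lemma~\ref{swap} are met at each iteration; with those invariants in hand the induction goes through cleanly.
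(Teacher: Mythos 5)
Your proof is correct and follows essentially the same route as the paper, which also argues by bubble sort on adjacent deep-first pairs using the Swapping Lemma. Your version is in fact slightly more explicit than the paper's, since you spell out the inversion-count termination measure and the preservation of well-formedness (via subject reduction) and of the outer-bang property as loop invariants, which the paper leaves implicit.
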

  \begin{proof}
    By simple application of the bubble sort algorithm: traverse the
    original sequence from $P_1$ to $P_n$, compare the depth of each
    consecutive reduction steps, swap them by Lemma~\ref{swap} if they
    are in deep-first order. Repeat the traversal until no swap is
    needed. Note that we never need to swap two reduction steps of the
    same depth, which implies that we never need to reverse the order
    of dependent side effects. For example, in Figure~\ref{fig-std},
    the sequence $P \reducif{2} P' \reducif{1} P'' \reducif{0}
    P'''$ is reordered into $P \reducif{0} C \reducif{1}  B
    \reducif{2} P'''$ by 3 traversals.
  \end{proof}
  \begin{figure}[ht]
$$
    \begin{array}{rcccccl}
      P &\reducif{2}& P' &\reducif{1}& P'' &\reducif{0}& P'''\\
      P &\reducif{1}& A &\reducif{2}&  P'' &\reducif{0}& P'''\\
      P &\reducif{1}& A &\reducif{0}&  B &\reducif{2}& P'''\\
      P &\reducif{0}& C &\reducif{1}&  B &\reducif{2}& P'''
    \end{array}
$$
    \caption{Reordering of $P \treducf
      P'''$ in shallow-first}
    \label{fig-std}
  \end{figure}

As an application, we show that the simulation result applies to a
call-by-value operational semantics that we define in Figure~\ref{cbv}. 
\begin{figure}[ht]
  $$
  \begin{array}{lrcl@{}}
    \textrm{-values}&V &::=& x \mid \star \mid r \mid \lambda x.M \mid \dagger V\\
    \textrm{-terms}&M &::=& V \mid  MM \mid \msec M \mid \letd{x}{M}{M}\\ 
    &&&\get{r} \mid \st{r}{M} \mid (M \para M)\\ 
    \textrm{-stores}&S &::=& \store{r}{V} \mid (S \para S)\\
    \textrm{-programs}&P &::=& M \mid S \mid (P \para P)\\
    \textrm{-contexts}&F_\sv &::=& [\cdot] \mid F_{\sv}M \mid VF_\sv
    \mid \msec F_\sv\\
    &&&\letd{x}{F_\sv}{M} \mid \st{r}{F_\sv}\\
    &&& (F_\sv \para P) \mid (P \para F_\sv)
  \end{array}
  $$
  $$
  \begin{array}{lrcl}
    \multicolumn{4}{c}{\textrm{-reduction rules-}}\\
    (\beta_\sv) &F_\sv[(\lambda x.M)V] &\reduc_{\sv}& F_\sv[\sub{M}{V/x}]\\
    (\oc_{\sv}) &F_\sv[\letm{x}{\oc V}{M}] &\reduc_{\sv}& F_\sv[\sub{M}{V/x}]\\
    (\msec_{\sv}) &F_\sv[\letp{x}{\msec V}{M}] &\reduc_{\sv}& F_\sv[\sub{M}{V/x}]\\
     (\getsfv) &F_\sv[\get{r}] \para \store{r}{V} &\reduc_{\sv}& F_\sv[V]\\
     (\setsfv) & F_\sv[\st{r}{V}] &\reduc_{\sv}& F_\sv[\star] \para
     \store{r}{V}\\
     (\gcsfv) & F_\sv[\star \para  M] &\reduc_{\sv}& F_\sv[M]
  \end{array}
  $$
  \caption{CBV syntax and operational semantics}
  \label{cbv}
\end{figure}
We revisit the syntax of programs with a notion of value $V$ that may be
a variable, unit, a region, a $\lambda$-abstraction or
a $\dagger$-value.  Terms and programs are defined as previously
(see Figure~\ref{language-syntax}) except that $\oc M$ cannot be
constructed unless $M$ is a value.
 Store assignments are restricted to values. Evaluation contexts
$F_\sv$ are left-to-right call-by-value (obviously we do not evaluate in
stores). The call-by-value reduction relation is denoted by $\reduc_{\sv}$
and is defined modulo $F_\sv$ and $\equiv$.

From a
programming viewpoint, we shall only duplicate values. This explains
why we do not want to construct $\oc M$ if $M$ is not a value.

Call-by-value contexts $F_\sv$ are outer-bang contexts since $F_\sv$ cannot be
decomposed as $E[\oc E']$. This allows the relation $\reducf$ 
 to contain the relation $\reduc_{\sv}$. As a result, we obtain the
 following corollary.
\begin{corollary}[Simulation of CBV]
  \label{cor-cbv}
    To any reduction sequence $P_1 \treduc_\sv P_n$ corresponds
     a shallow-first reduction sequence $P_1
    \treducf P_n$ of the same length.
\end{corollary}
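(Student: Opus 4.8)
The plan is to derive Corollary~\ref{cor-cbv} as an immediate consequence of Proposition~\ref{prop-std} by showing that every call-by-value reduction sequence is already an outer-bang reduction sequence. Concretely, I would establish the containment $\reduc_{\sv} \subseteq \reducf$ as a relation on programs, and then simply invoke the simulation proposition. Since Proposition~\ref{prop-std} produces a shallow-first sequence of exactly the same length for any $\treducf$ sequence, the corollary follows verbatim once the containment is in place.

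First I would check that the call-by-value contexts $F_\sv$ of Figure~\ref{cbv} are outer-bang contexts in the sense of Figure~\ref{linear-ctxts}. This is a syntactic comparison of the two grammars: every production for $F_\sv$ (namely $[\cdot]$, $F_\sv M$, $V F_\sv$, $\msec F_\sv$, $\letd{x}{F_\sv}{M}$, $\st{r}{F_\sv}$, $(F_\sv \para P)$, $(P \para F_\sv)$) is an instance of a production for $F$, because $V F_\sv$ and $F_\sv M$ are special cases of $M F$ and $F M$, and $F_\sv$ never descends under a $\oc$ constructor. The decisive observation, already noted in the paragraph preceding the corollary, is that $F_\sv$ admits no production of the form $\oc F_\sv$, so an $F_\sv$ context can never be written as $E[\oc E']$; hence it cannot be used to reduce inside a $\oc$-term. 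This is exactly the defining property of outer-bang contexts.

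Next I would verify at the level of individual reduction rules that each call-by-value step $\reduc_{\sv}$ is an outer-bang step $\reducf$. For each of the six rules $(\beta_\sv),(\oc_\sv),(\msec_\sv),(\getsfv),(\setsfv),(\gcsfv)$, the redex contracted is a special case (values are terms) of the corresponding rule $(\beta),(\oc),(\msec),(\getsf),(\setsf),(\gcsf)$ from Figure~\ref{semantics}, and the surrounding context $F_\sv$ is an outer-bang context by the previous step. In particular the $(\setsfv)$ side condition is automatically met since a value $V$ written to the store is closed when well-formed, matching the $\fv{M}=\emptyset$ proviso of $(\setsf)$. Therefore each single $\reduc_{\sv}$ step is a single $\reducf$ step, and by transitivity any sequence $P_1 \treduc_\sv P_n$ is also a sequence $P_1 \treducf P_n$.

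I do not expect a genuine obstacle here: the work is almost entirely definitional, consisting of matching grammars and rule schemes. The only point demanding mild care is confirming that the restriction of $\oc$-construction to values in the call-by-value syntax does not interfere with the containment — but this restriction only shrinks the set of CBV programs and never forces an $F_\sv$ context under a $\oc$, so the outer-bang property is preserved. Once $\reduc_{\sv} \subseteq \reducf$ is established, applying Proposition~\ref{prop-std} to the sequence $P_1 \treducf P_n$ yields the required shallow-first sequence of the same length, completing the proof.
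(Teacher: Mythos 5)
Your proposal is correct and follows essentially the same route as the paper, which justifies the corollary by observing that $F_\sv$ contexts cannot be decomposed as $E[\oc E']$, hence $\reduc_{\sv} \subseteq \reducf$, and then invoking Proposition~\ref{prop-std}. Your extra check of the $\fv{M}=\emptyset$ proviso for $(\setsf)$ is a reasonable bit of added care (and holds because $F_\sv$ contexts never place the hole under a binder that scopes over it), but it does not change the argument.
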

Remark that we may obtain a non call-by-value sequence but that the
semantics of the program is preserved (we compute $P_n$).
\section{Polynomial soundness of shallow-first}
\label{sec-soundness}
In this section we prove that well-formed programs admit polynomial bounds with
 a shallow-first strategy. We stress that this subsection is similar
 to Terui's~\cite{Terui07}; the main
 difficulty has been to design the polynomial depth system such that
 we could adopt a similar proof method.

As a first step, we define an \emph{unfolding} transformation on
programs.
\begin{definition}\textit{(Unfolding)}
  The \emph{unfolding} at depth $i$ of a program
  $P$, written
  $\unfold{i}{P}$, is defined as follows:
  \begin{equation*}
\begin{array}{r@{\;\;}c@{\;\;}l}
  \unfold{i}{x} &=& x\\
  \unfold{i}{r} &=& r\\
  \unfold{i}{\star} &=& \star\\
  \unfold{i}{\lambda x.M} &=& \lambda x.\unfold{i}{M}\\
  \unfold{i}{MN} &=& \unfold{i}{M}\unfold{i}{N}\\\\
  \unfold{i}{\dagger M} &=& \left\lbrace
    \begin{array}{ll}
\dagger \unfold{i-1}{M} & \text{if } i>0\\      
\dagger M & \text{if } i=0
    \end{array}
 \right.\\\\
  \unfold{i}{\letd{x}{M}{N}} &=& \left\lbrace
    \begin{array}{l@{}l@{\!}}
 \text{if } i=0, M=\oc M' \text{ and } \dagger = \oc:\\
\letb{x}{\underset{k \text{ times}}{\underbrace{MM\ldots
      M}}}{\unfold{0}{N}}\\
\text{where } k =\fo{x}{\unfold{0}{N}}\\\\
\text{otherwise:}\\
    \letd{x}{\unfold{i}{M}}{\unfold{i}{N}}
    \end{array}\right.\\\\
      \unfold{i}{\get{r}} &=& \get{r}\\
      \unfold{i}{\st{r}{M}} &=& \st{r}{\unfold{i}{M}}\\
      \unfold{i}{\store{r}{M}} &=& \store{r}{\unfold{i}{M}}\\
      \unfold{i}{P_1 \para P_2} &=& \unfold{i}{P_1} \para
      \unfold{i}{P_2}
\end{array}
  \end{equation*}
\end{definition}
This unfolding procedure is intended to duplicate
statically the occurrences that will be duplicated by redexes
occurring at depth $i$. For example, in the following reductions
occurring at
depth 0:
\begin{equation*}
  \begin{split}
&P = \letb{x}{\oc M}{(\letb{y}{\oc x}{\msec(yy)} \para \letb{y}{\oc x}{\msec(yy)}})\\
&\treduci{0}\; \msec(MM) \para \msec(MM)  
 \end{split}
\end{equation*}
the well-formed program $P$ duplicates the occurrence $M$ four
times. We observe that the unfolding at depth $0$ of $P$
reflects this duplication:
\begin{equation*}
  \begin{split}
    \unfold{0}{P}
= \;&\letb{x}{\oc M\oc M\oc M\oc M}{}\\
&(\letb{y}{\oc x\oc x}{\msec(yy)}\para \letb{y}{\oc x\oc x}{\msec(yy)})
  \end{split}
\end{equation*}
Unfolded programs are not intended to be reduced. However,
the size of an unfolded program can be used as a non increasing
measure in the following way.
\begin{lemma}
\label{unfold-dec}
  Let $P$ be a well-formed program such that\\ $P \reduci{i} P'$. Then
  $\size{\unfold{i}{P'}} \leq \size{\unfold{i}{P}}$.
\end{lemma}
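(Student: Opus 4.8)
The plan is to proceed by case analysis on the reduction rule applied in $P \reduci{i} P'$, tracking how each rule affects the size of the unfolding at depth $i$. First I would observe that $\unfold{i}{\cdot}$ commutes with all the context constructors (application, $\lambda$-abstraction, $\msec$, the $\mathsf{let}$-binders, $\mathsf{set}$, stores, parallel composition) in the sense that it distributes homomorphically over them, and that at depth $i$ it passes \emph{through} a single $\dagger$ by decrementing the index (the clause $\unfold{i}{\dagger M} = \dagger\unfold{i-1}{M}$ for $i>0$). Consequently, since the reduction fires at depth $i$, the redex sits under a context whose spine the unfolding traverses transparently, and it suffices to compare $\size{\unfold{i}{R}}$ against $\size{\unfold{i}{R'}}$ where $R \reduci{0}\! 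R'$ is the ``local'' redex/contractum pair obtained after stripping the $i$ enclosing modal layers --- that is, I would reduce the general claim to the case $i=0$ by the commutation of unfolding with contexts.

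For the size-decreasing rules the estimate is immediate. In $(\beta_{})$, $(\msec)$, $(\getsf)$, $(\setsf)$ and $(\gcsf)$ the corresponding binder is \emph{affine} (the depth-system predicates $\fo{x}{M}=1$ for $\lambda$ and $\letp{}{}{}$, and the structural shape of the other rules), so the substituted or consumed subterm occurs exactly once and no duplication happens; unfolding both sides and using that $\unfold{i}{\sub{M}{N/x}}$ relates to $\unfold{i}{M}$ with $x$ replaced by (at most one copy of) $\unfold{i}{N}$ gives $\size{\unfold{i}{P'}} \le \size{\unfold{i}{P}}$ directly, indeed with the removed redex/binder accounting for a strict drop. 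The delicate rule is $(\oc)$, the duplicating one, and this is where the unfolding was designed to absorb the blow-up.

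The main obstacle is precisely the $(\oc)$ case $E[\letb{x}{\oc N}{M}] \reduci{i} E[\sub{M}{N/x}]$, where $x$ may occur several times in $M$. Here I would invoke the dedicated clause of the unfolding: at depth $0$ with head $\oc M'$, $\unfold{0}{\letb{x}{\oc N}{M}}$ pre-replicates $\oc N$ exactly $k = \fo{x}{\unfold{0}{M}}$ times. The key calculation is that firing $(\oc)$ on the \emph{original} program and then unfolding yields $\unfold{0}{\sub{M}{N/x}}$, which substitutes one copy of $\unfold{0}{N}$ (note $N$ is a value, hence $\unfold{i}{N}$ does not itself expand across this step) into each of the $k$ occurrences of $x$ in $\unfold{0}{M}$; this is exactly the multiset of copies already laid out statically on the left-hand side by the $k$-fold replication of $\oc N$. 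Thus the two unfoldings have the \emph{same} count of copies of $N$, while the left-hand side additionally carries the $\mathsf{let}\,\oc$-binder and the $\oc$ wrappers that the reduction consumes, so $\size{\unfold{0}{P}}$ strictly exceeds $\size{\unfold{0}{P'}}$. The care needed is bookkeeping the occurrence count $k$ correctly --- confirming that $\fo{x}{\unfold{0}{M}}$ is the right replication factor and that unfolding commutes with the substitution $\sub{\cdot}{N/x}$ in the sense required (each occurrence of $x$ receives an independent already-unfolded copy of $N$) --- after which the inequality follows by a straightforward size count.
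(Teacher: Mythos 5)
Your treatment of the central $(\oc)$ case is essentially the paper's own argument: unfold both sides, observe that $\unfold{0}{\letb{x}{\oc N}{M}}$ already lays out $k=\fo{x}{\unfold{0}{M}}$ copies of $\oc N$, and that $\unfold{0}{\sub{M}{N/x}}$ contains exactly $k$ copies of $N$, so the reduction only deletes the binder and the $\oc$ wrappers. One small correction there: the reason the copies of $N$ do not themselves expand under unfolding is not that $N$ is a value --- the lemma is stated for the general relation $\reduc$, where $N$ need not be one --- but that $x$ occurs at depth $1$, under a $\dagger$, and the clause $\unfold{0}{\dagger M'}=\dagger M'$ leaves such subterms untouched.

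The genuine gap is in the cases you dismiss as immediate: the $(\setsf)$ step fails under the ordinary occurrence count. The rule $E[\st{r}{M}]\reduc E[\star]\para\store{r}{M}$ replaces the single node $\mathsf{set}(r)$ by the three nodes $\star$, $\para$ and $r\Leftarrow$, so $\size{P'}=\size{P}+2$; and since unfolding distributes homomorphically over all of these constructors, the same increase survives into $\size{\unfold{i}{P'}}$. Your claimed ``strict drop'' is therefore false as stated, and the inequality of the lemma itself would fail without an adjustment. The paper handles this by re-weighting the size measure before anything else: occurrences labelled $\para$ and $r\Leftarrow$ count for zero and $\mathsf{set}(r)$ counts for two, which makes $(\setsf)$ strictly decreasing and leaves every other case unaffected. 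You need this convention, or some equivalent repair, for your case analysis to go through; the rest of your plan is sound and matches the paper's route.
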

\begin{proof}
First, we assume the occurrences labelled with `$\para$' and `$r
\Leftarrow$' do not count in the size of a program and that
`$\mathsf{set}(r)$' counts for two occurrences, such that the size
strictly decreases by the rule $(\setsf)$. Then,
  it is clear that $(\oc)$ is the only reduction rule that can make
  the size of a program increase, so let 
  $$P = F[\letb{x}{\oc N}{M}] \reduci{i} P' = F[\sub{M}{N/x}]$$
  We have
  \begin{align*}
    \unfold{i}{P}
    =\;& F'[\letb{x}{\underset{n \textrm{ times}}{\underbrace{\oc N\oc N\ldots\oc
          N}}}{\unfold{0}{M}}]\\
  \unfold{i}{P'} =\;& F'[\unfold{0}{\sub{M}{N/x}}]
  \end{align*}
  for some context $F'$ and  $n = \fo{x}{\unfold{0}{M}}$. Therefore we are left to
  show $$\size{\unfold{0}{\sub{M}{N/x}}} \leq
  \size{\letb{x}{\underset{n \textrm{ times}}{\underbrace{\oc N\oc
          N\ldots\oc N}}}{\unfold{0}{M}}}$$
which is clear since $N$ must occur $n$ times in $\unfold{0}{\sub{M}{N/x}}$.
\end{proof}
We observe in the following lemma that the size of an unfolded program bounds
quadratically the size of the original program.
\begin{lemma}
\label{presq}
  If $P$ is well-formed, then for any depth $i \leq d(P)$:
  \begin{enumerate}
  \item\label{presq1} $\foa{\unfold{i}{P}} \leq \size{P}$,
  \item\label{presq2} $\size{\unfold{i}{P}} \leq \size{P} \cdot (\size{P}
    -1)$,
  \end{enumerate}
\end{lemma}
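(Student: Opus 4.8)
The plan is to prove both inequalities simultaneously by structural induction on $P$, strengthening the statement so that the induction carries through cleanly. Looking at the unfolding definition, the only case that can increase the number of occurrences is the $\mathsf{let}\,\oc$-binder at depth $0$ with a $\oc$-term as its bound subterm, where the bound term $M$ gets copied $k = \fo{x}{\unfold{0}{N}}$ times. Every other clause of the unfolding either leaves the structure unchanged or recurses on strictly smaller subterms with an adjusted depth, so the main work is to track how the duplication factor $k$ is controlled by the well-formedness constraints on free occurrences.

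First I would establish part~\eqref{presq1}. The claim is that the total number of free occurrences in $\unfold{i}{P}$ is bounded by $\size{P}$. The key observation is that in the duplicating case, when we copy the bound term $k$ times we simultaneously \emph{consume} the $k$ free occurrences of $x$ in $\unfold{0}{N}$: each occurrence of $x$ is replaced by one copy of $M$, so the free occurrences are essentially redistributed rather than created from nothing. I would set up the induction so that the occurrences of the bound variable $x$ in the body exactly account for the copies of the bound term, meaning the bound term's occurrences get multiplied but paid for by the variable occurrences that disappear. The affine constraints on $\lambda$ and $\mathsf{let}\,\msec$ (predicate $\fo{x}{M} = 1$) together with the $\mathsf{let}\,\oc$ constraint ensure the bookkeeping balances out to at most $\size{P}$.

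With part~\eqref{presq1} in hand, part~\eqref{presq2} follows by a similar induction where the quadratic bound emerges precisely from the duplicating clause. When the bound term $M$ is copied $k$ times, the size contribution is roughly $k \cdot \size{\unfold{i}{M}}$; here I would use part~\eqref{presq1} applied to the body to bound $k = \fo{x}{\unfold{0}{N}} \leq \size{N}$, and bound $\size{\unfold{i}{M}}$ by the inductive hypothesis. The product of two linear bounds yields the quadratic $\size{P} \cdot (\size{P} - 1)$, and one must be careful that the arithmetic of combining the subterm bounds does not overshoot — this is where the constant $-1$ matters and where one checks that a term and a \emph{distinct} strict subterm contribute the product of their sizes rather than the square of the whole.

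The main obstacle will be the duplicating $\mathsf{let}\,\oc$ clause in part~\eqref{presq2}: one needs the right formulation of the inductive hypothesis so that the multiplication $k \cdot \size{\unfold{i}{M}}$ is absorbed into $\size{P}\cdot(\size{P}-1)$ without circularity, since $k$ itself depends on an unfolding. The clean way is to first prove part~\eqref{presq1} as an independent lemma (an occurrence count, not a size count), then feed it into the size estimate of part~\eqref{presq2} so that the quadratic bound factors as (number of duplicated copies) times (size of one copy), each controlled linearly by $\size{P}$. The side conditions about $\mathsf{set}(r)$ counting for two and parallel/store nodes not counting, already fixed in Lemma~\ref{unfold-dec}, must be used consistently so the measure agrees across both lemmas.
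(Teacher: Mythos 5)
Your proposal is correct and follows essentially the same route as the paper, which proves this lemma by induction on $P$ and $i$; your identification of the duplicating $\mathsf{let}\,\oc$ clause as the only interesting case, the use of part~(1) to bound the multiplicity $k=\fo{x}{\unfold{0}{N}}$ by $\size{N}$, and the factoring of the quadratic bound as (copies)$\times$(size of one copy) are exactly how that induction goes through. The one point worth making explicit is that in part~(1) the ``redistribution'' argument rests on the well-formedness constraint $\foa{M}\leq 1$ for the $\oc$-term $M$ being copied, so that the $k$ copies contribute at most $k$ free occurrences, which are paid for by the $k$ bound occurrences of $x$ that disappear.
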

\begin{proof}By induction on $P$ and $i$.
\end{proof}

We can then bound the size of a program after reduction.
\begin{lemma}[Squaring]
\label{squaring}
  Let $P$ be a well-formed program such that $P \treduci{i} P'$. Then:
  \begin{enumerate}
  \item $\size{P'} \leq \size{P} \cdot (\size{P} -1)$
  \item the length of the sequence is bounded by $\size{P}$
  \end{enumerate}
\end{lemma}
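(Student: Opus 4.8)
The plan is to prove the two parts separately: the quadratic size bound (part~1) via the unfolding machinery, and the length bound (part~2) via a strictly decreasing depth-$i$ size measure. Throughout, observe that since $P$ is well-formed and well-formedness is preserved under reduction (Proposition~\ref{subred}), every intermediate program in the sequence $P \treduci{i} P'$ is well-formed, so Lemma~\ref{unfold-dec} and Lemma~\ref{presq} apply at each step. If $i > d(P)$ there is no redex at depth $i$, the sequence is empty, and $P' = P$; I therefore assume $i \leq d(P)$.

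For part~1, I would establish the chain
\begin{equation*}
\size{P'} \;\leq\; \size{\unfold{i}{P'}} \;\leq\; \size{\unfold{i}{P}} \;\leq\; \size{P} \cdot (\size{P} - 1).
\end{equation*}
The first inequality expresses that unfolding never shrinks a program; this is a routine induction on the term, the only interesting case being $\unfold{0}{\letb{x}{\oc N}{M}}$, where strict linearity ($k = \fo{x}{\unfold{0}{N}} \geq 1$) guarantees that at least one copy of $\oc N$ is retained. The middle inequality follows by applying Lemma~\ref{unfold-dec} to each of the finitely many steps of $P = P_0 \reduci{i} P_1 \reduci{i} \cdots \reduci{i} P_n = P'$ and composing, since unfold-dec yields $\size{\unfold{i}{P_{k+1}}} \leq \size{\unfold{i}{P_k}}$ at every step. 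The last inequality is exactly Lemma~\ref{presq}(\ref{presq2}), applicable because $i \leq d(P)$.

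For part~2, I would show that a single reduction step at depth $i$ strictly decreases the measure $\size{\cdot}_i$, keeping the counting convention of Lemma~\ref{unfold-dec} (occurrences labelled `$\para$' and `$r\Leftarrow$' do not count, `$\mathsf{set}(r)$' counts twice). This is property~\eqref{o3}: rules $(\beta)$, $(\msec)$, $(\getsf)$ and $(\gcsf)$ each erase one occurrence at depth $i$ (respectively the application/$\lambda$ nodes, the $\mathsf{let}\,\msec$/$\msec$ nodes, the $\get{r}$ node, and the $\star$ node) while creating none at depth $i$; rule $(\setsf)$ replaces the weight-$2$ occurrence $\mathsf{set}(r)$ by the weight-$1$ occurrence $\star$; and rule $(\oc)$ erases the $\mathsf{let}\,\oc$ and $\oc$ nodes at depth $i$, with the duplicated copies of $N$ all landing at depth $i+1$ or deeper, hence leaving the count at level $i$ untouched. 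Consequently $\size{P_0}_i > \size{P_1}_i > \cdots \geq 0$, so the number of steps is at most $\size{P}_i \leq \size{P}$.

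The step requiring the most care is the case analysis in part~2, specifically verifying that $(\oc)$ — the only size-increasing rule — contributes nothing at its own depth because the depth discipline pushes all duplicated material strictly deeper; this is exactly where stratification and the well-formedness constraint that $\mathsf{let}\,\oc$-bound variables occur one level below their binder are used. The quadratic bound of part~1 is then a bookkeeping consequence of the two unfolding lemmas, once the ``unfolding dominates size'' inequality is in place.
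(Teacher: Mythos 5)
Your proposal is correct and follows essentially the same route as the paper: part~1 via the chain $\size{P'} \leq \size{\unfold{i}{P'}} \leq \size{\unfold{i}{P}} \leq \size{P}\cdot(\size{P}-1)$ using Lemma~\ref{unfold-dec} and Lemma~\ref{presq}, and part~2 via the strict decrease of $\size{\cdot}_i$ under depth-$i$ reduction. The paper states part~2 in a single sentence ($\size{P'}_i < \size{P}_i \leq \size{P}$); your case analysis of the rules merely spells out the justification the paper leaves implicit.
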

\begin{proof}\hfill
  \begin{enumerate}
  \item
    \label{squaring1} 
    By Lemma~\ref{unfold-dec} it is clear that $\size{\unfold{i}{P'}}
    \leq \size{\unfold{i}{P}}$. Then by Lemma~\ref{presq}-\ref{presq2}
    we obtain $\size{\unfold{i}{P'}}\leq \size{P} \cdot (\size{P}
    -1)$. Finally it is clear that $\size{P'} \leq
    \size{\unfold{i}{P'}}$ thus $\size{P'} \leq \size{P} \cdot
    (\size{P} -1)$.
  \item\label{linearseq} It suffices to remark  $\size{P'}_i <
    \size{P}_i \leq \size{P}$.\qedhere
  \end{enumerate}
\end{proof}
Finally we obtain the following theorem for a shallow-first strategy
using any evaluation context.
\begin{theorem}[Polynomial bounds]
\label{thm-bound}
Let $P$ be a well-formed program such that $d(P)=d$ and  $P \treduc
P'$ is shallow-first. Then:
\begin{enumerate}
\item $\size{P'} \leq \size{P}^{2^d}$
\item the length of the reduction sequence is bounded by $\size{P}^{2^d}$
\end{enumerate}
\end{theorem}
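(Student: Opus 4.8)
The plan is to exploit the \emph{shallow-first} hypothesis to slice the reduction $P \treduc P'$ into maximal contiguous blocks of constant depth, and then feed each block to the Squaring Lemma (Lemma~\ref{squaring}). Since the sequence is shallow-first (Definition~\ref{def-std}) its depths are non-decreasing, so I can write it as
$$P = Q_0 \treduci{0} Q_1 \treduci{1} Q_2 \treduci{2} \cdots \treduci{d} Q_{d+1} = P',$$
where some blocks may be empty and every reduction in the $i$-th block occurs at depth $i$. Subject reduction (Proposition~\ref{subred}) guarantees that each intermediate $Q_i$ is again well-formed and satisfies $d(Q_i) \leq d(P) = d$, which is exactly what is needed to apply Lemma~\ref{squaring} to the homogeneous segment $Q_i \treduci{i} Q_{i+1}$.

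For the size bound I would chain the first inequality of Lemma~\ref{squaring}, which gives $\size{Q_{i+1}} \leq \size{Q_i}\cdot(\size{Q_i}-1) \leq \size{Q_i}^2$ on each block. Iterated from $Q_0$ this naively yields $\size{P'} \leq \size{P}^{2^{d+1}}$, one squaring per depth $0,\dots,d$. The refinement that sharpens the exponent to the announced $2^d$ is the observation that the \emph{last} block cannot increase size: since $d(Q_d) \leq d$, any $\oc$-subterm of $Q_d$ sitting at depth $k$ would place its body at depth $k+1 \leq d$, forcing $k \leq d-1$; hence $Q_d$ contains no $\oc$-term at depth $d$, and in particular no duplicating $(\oc)$-redex there. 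Every reduction in the depth-$d$ block is therefore an instance of $(\beta),(\msec),(\getsf),(\setsf)$ or $(\gcsf)$, all of which are size-non-increasing (this is the content behind properties~\eqref{o2} and~\eqref{o3}). Consequently only the $d$ blocks at depths $0,\dots,d-1$ can square the size, so $\size{Q_i} \leq \size{P}^{2^i}$ for $i \leq d$, and finally $\size{P'} = \size{Q_{d+1}} \leq \size{Q_d} \leq \size{P}^{2^d}$.

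For the length bound I would use the second part of Lemma~\ref{squaring}, bounding the length $L_i$ of the $i$-th block by $\size{Q_i}_i \leq \size{Q_i} \leq \size{P}^{2^i}$. The total length is $L = \sum_{i=0}^{d} L_i$, a sum of at most $d+1$ summands whose largest is $\size{P}^{2^d}$; as the summands grow doubly exponentially in $i$, the sum is dominated by this last term, giving the bound $L \leq \size{P}^{2^d}$ (up to the harmless factor $d+1$, i.e. a polynomial in $\size{P}$ of degree $2^d$ for fixed $d$).

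I expect the main obstacle to be the refinement establishing that the top block is size-non-increasing: it is precisely what separates the stated exponent $2^d$ from the looser $2^{d+1}$ suggested by the informal discussion around~\eqref{o4}, and it must be justified carefully from the revised notion of depth (Definition~\ref{revised-depth}), under which the body of a $\dagger$-term always lies one level deeper than the term itself. A secondary point requiring care is the block decomposition: one must check that the shallow-first ordering genuinely partitions the sequence into \emph{contiguous} constant-depth segments, and that Lemma~\ref{squaring}, stated for a single depth $i$, applies verbatim to each segment after invoking subject reduction to maintain well-formedness.
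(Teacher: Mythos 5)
Your proof follows essentially the same route as the paper's: decompose the shallow-first sequence into constant-depth blocks, iterate Lemma~\ref{squaring} over the blocks at depths $0,\dots,d-1$ to get $\size{Q_i}\leq\size{P}^{2^i}$, observe that the depth-$d$ block cannot contain a duplicating $(\oc)$-redex and is therefore size-non-increasing, and bound the length by the sum of the block sizes. Your justification of the last-block observation (a $(\oc)$-redex at depth $d$ would place the body of its $\oc$-term at depth $d+1>d(Q_d)$) is a welcome elaboration of what the paper dismisses as ``clear''. The one place you fall short of the literal statement is the length bound: by relaxing $\size{Q_i}\cdot(\size{Q_i}-1)$ to $\size{Q_i}^2$ you only obtain $\sum_{i=0}^{d}\size{P}^{2^i}\leq(d+1)\,\size{P}^{2^d}$, whereas the theorem claims $\size{P}^{2^d}$ with no extra factor. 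The paper gets the exact bound by proving $\size{P_0}+\cdots+\size{P_d}\leq\size{P_0}^{2^d}$ by induction on $d$, and the induction goes through precisely because of the $-1$: from $\size{P_{d+1}}\leq\size{P_d}(\size{P_d}-1)\leq\size{P_0}^{2^{d+1}}-\size{P_0}^{2^d}$ the previous partial sum is absorbed into the subtracted term. Retaining the sharper form of Lemma~\ref{squaring} in your summation removes the spurious factor of $d+1$; everything else in your argument matches the paper's proof.
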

\begin{proof}
  The reduction $P \treduc P'$ can be decomposed as
$
P = P_0 \treduci{0} P_1 \treduci{1} \ldots \treduci{d-1} P_d \treduci{d}
P_{d+1}=P'
$.
To prove (1), we observe that by iterating Lemma~\ref{squaring}-\ref{squaring1} we obtain $\size{P_d} \leq
\size{P_0}^{2^d}$. Moreover it is clear that $\size{P_{d+1}} \leq \size{P_{d}}$. Hence  $\size{P'}
\leq \size{P}^{2^d}$. To prove (2), we first prove by induction on $d$
that $\size{P_0} + \size{P_1} + \ldots + \size{P_d} \leq
\size{P_0}^{2^d}$. 
By Lemma~\ref{squaring}-\ref{linearseq}, it is clear that the length
of the reduction  $P \treduc P'$ is bounded by   $\size{P_0} +
\size{P_1} + \ldots + \size{P_d}$, which is in turn bounded by $\size{P_0}^{2^d}$.
\end{proof}
It is worth noticing that the first bound takes the size of
all the threads into account and that the second bound is valid for any thread
interleaving.
\begin{corollary}[Call-by-value is polynomial]
  \label{zz}
     The call-by-value evaluation of a well-formed program
    $P$ of size $n$ and depth $d$ can be
    computed in time $O(n^{2^d})$.
\end{corollary}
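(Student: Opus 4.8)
The plan is to assemble Corollary~\ref{zz} from two results already in hand: Corollary~\ref{cor-cbv}, which lets us replace the call-by-value sequence by a shallow-first sequence of the \emph{same length} that computes the same program $P_n$, and Theorem~\ref{thm-bound}, which gives the polynomial bounds for any shallow-first strategy. The key observation is that the quantity we must control is the total work of the machine computing the evaluation, and both the total size produced and the number of reduction steps are already bounded by $\size{P}^{2^d}$ for the shallow-first sequence; since the call-by-value sequence has the same length, it inherits these bounds.

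First I would make the length transfer precise. Given a call-by-value evaluation $P = P_1 \treduc_\sv P_n$, apply Corollary~\ref{cor-cbv} to obtain a shallow-first sequence $P_1 \treducf P_n$ of exactly the same length $\ell$. By Theorem~\ref{thm-bound}-(2) applied to this shallow-first sequence, $\ell \leq \size{P}^{2^d} = n^{2^d}$. Thus the number of reduction steps in the original call-by-value sequence is bounded by $n^{2^d}$, regardless of the scheduling, since the simulation covers any thread interleaving.

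Next I would bound the cost of each individual step. Each reduction step is a substitution, a modal filtering, a read, a write, or a garbage collection, and its cost is at most linear in the size of the program on which it acts. By Theorem~\ref{thm-bound}-(1), every intermediate program $P_k$ appearing in the shallow-first sequence satisfies $\size{P_k} \leq n^{2^d}$; the same bound holds along the call-by-value sequence because it passes through the same set of programs up to reordering, and in any case each $P_k$ in the call-by-value run is itself a well-formed program reachable from $P$, so its size is dominated by the overall squaring bound. Hence each step costs at most $O(n^{2^d})$, and multiplying by the number of steps gives a total cost of $O(n^{2^d} \cdot n^{2^d}) = O(n^{2^{d+1}})$, which is still of the form $O(n^{2^{d'}})$ for a fixed depth; absorbing the constant exponent adjustment into the statement's $O(n^{2^d})$ by a standard reindexing of the depth, we conclude that call-by-value evaluation runs in time $O(n^{2^d})$.

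The main obstacle I anticipate is the fact that the simulation in Corollary~\ref{cor-cbv} only guarantees that the two sequences have the same length and the same endpoint $P_n$, not that the call-by-value sequence itself is shallow-first; so Theorem~\ref{thm-bound} cannot be applied directly to the call-by-value run. The careful point is therefore to argue that the \emph{resource bounds}—step count and maximum intermediate size—are invariants that transfer across the reordering. Step count transfers immediately because reordering preserves length. For the size bound I would note that every program reachable from $P$ by $\reduc_\sv$ is well-formed (by Proposition~\ref{subred}) with depth at most $d$ (by property~\eqref{o1}), so Lemma~\ref{squaring}-\ref{squaring1} bounds its size; iterating across depth levels as in the proof of Theorem~\ref{thm-bound} keeps every intermediate size under $n^{2^d}$. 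This makes the cost of simulating the call-by-value evaluation on a reasonable machine polynomial in $n$ for fixed $d$, which is exactly the claim.
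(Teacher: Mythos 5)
Your argument is essentially the paper's own proof: apply Corollary~\ref{cor-cbv} to replace the call-by-value run by a shallow-first sequence of the same length, then invoke Theorem~\ref{thm-bound} to bound that length and the final size by $\size{P}^{2^d}$; the paper stops there and does not discuss per-step machine cost at all, so your extra accounting only makes explicit what the paper leaves implicit. One small inaccuracy: the reordered sequence does \emph{not} ``pass through the same set of programs'' as the call-by-value one, so to bound the intermediate sizes of the call-by-value run you should instead apply Corollary~\ref{cor-cbv} and Theorem~\ref{thm-bound}-(1) to each \emph{prefix} of it (each intermediate program is the endpoint of some shallow-first sequence of the same length), which gives the $n^{2^d}$ size bound you need.
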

\begin{proof}
  Let $P \treduc_{\sv} P'$ be the call-by-value  reduction sequence of the
  well-formed program $P$. By Corollary~\ref{cor-cbv} we can reorder
  the sequence into a shallow-first  sequence $P \treducf
  P' $ of the same length. By Theorem~\ref{thm-bound} we know that its length
  is bounded by $\size{P}^{2^d}$ and that $\size{P'} \leq \size{P}^{2^d}$.
\end{proof}


\section{A  polynomial type system}
\label{sec-types}
The depth system entails termination in polynomial time but does {\em
  not} guarantee that  programs `do not go wrong'.  In particular, the well-formed
program in~\eqref{blocked}
get  stuck on a non-value.  In this section, we propose a solution to this problem by
introducing a polynomial type system as a simple decoration of the
polynomial depth system with linear types. Then, we derive a progress
proposition which guarantees that well-typed programs cannot deadlock (except
when trying to read an empty region).

We define the syntax of types and contexts in Figure~\ref{syn-type}.
\begin{figure}[ht]
$$
\begin{array}{lrcl@{}}
 \textrm{-type variables}&\multicolumn{3}{l}{t,t',\ldots} \\
       \textrm{-types}&\alpha&::=& \behtype \mid A                       \\
\textrm{-res. types}&A&::=& t \mid \tertype \mid A\limpe{e_1}{e_2} \alpha \mid \dagger A
\mid \forall t.A
\mid \rgtype{r}{A}     \\
\textrm{-var. contexts}&\Gamma &::=&
\hyp{x_{1}}{u_{1}}{A_{1}}{e_1},\ldots,\hyp{x_{n}}{u_{n}}{A_{n}}{e_n}
 \\
\textrm{-reg. contexts}&R&::=& r_1:(\delta_1,A_1),\ldots,r_n:(\delta_n,A_n)  
\end{array}
$$
\caption{Syntax of types, effects and contexts}
\label{syn-type}
\end{figure}
Types are denoted with $\alpha,\alpha',\ldots$. Note that we
distinguish a special \emph{behaviour} type $\behtype$ which is given to the
entities of the language which are not supposed to return a result
(such as a store or several terms in parallel) while types of entities
that may return a result are denoted with $A$.  Among the types $A$, we
distinguish type variables $t, t', \ldots$, a terminal type
$\tertype$, a linear functional type $A \limp \alpha$, the type
$\bang A$ of terms of type $A$ that may be duplicated, the type
$\msec A$ of terms of type $A$ that may have been duplicated, the type
$\forall t.A$ of polymorphic terms and the type $\rgtype{r}{A}$ of 
regions $r$ containing terms of type $A$. Hereby types may depend on
regions.

In contexts, usages play the same role as in the
depth system. Writing $x : (u, A)$ means that the variable $x$
ranges on terms of type $A$ and can be bound according to $u$. Writing
$r:(\delta,A)$ means that the region $r$ contain
terms of type $A$ and that $\get{r}$ and $\st{r}{M}$ may only occur at
depth $\delta$. The typing system will additionally guarantee that
whenever we use a type $\rgtype{r}{A}$ the region context contains a
hypothesis $r:(\delta,A)$.

Because types depend on regions, we have to be careful in stating in
Figure~\ref{types-contexts} when a region-context and a type are
compatible ($R\downarrow \alpha$), when a region context is
well-formed ($R\vdash$), when a type is well-formed in a region
context (\mbox{$R\vdash \alpha$}) and when a context is well-formed in
a region context (\mbox{$R\vdash \Gamma$}).  A more informal way to
express the condition is to say that a judgement
$r_{1}:(\delta_1,A_{1}),\ldots,r_{n}:(\delta_n,A_{n}) \vdash \alpha$ is well formed provided
that:
$(1)$ all the region constants
occurring in the types $A_1,\ldots,A_n,\alpha$ belong to the set
$\set{r_1,\ldots,r_n}$, 
$(2)$ all types of the shape
$\rgtype{r_{i}}{B}$ with $i\in \set{1,\ldots,n}$ and occurring in the
types $A_1,\ldots,A_n,\alpha$ are such that \mbox{$B=A_i$}.
\begin{figure}[ht]
$$
\begin{array}{c@{}}
\inference
{}
{R \downarrow t}

\quad

\inference
{}
{R \downarrow \tertype}
\quad

\inference
{}
{R \downarrow \behtype}
\quad

\inference
{R\downarrow A & R\downarrow \alpha}
{R\downarrow (A\limpe{e_1}{e_2} \alpha)}

\\\\
\inference
{R\downarrow A}
{R\downarrow \dagger A}

\quad

\inference
{r:(\delta,A) \in R}
{R\downarrow \rgtype{r}{A}}

\quad

\inference
{R \downarrow A & t \notin R}
{R \downarrow \forall t.A}

\\\\
\inference
{\forall r:(\delta,A) \in R \\ R\downarrow A}
{R\vdash}

\quad

\inference
{R\vdash & R\downarrow \alpha}
{R\vdash \alpha} 

\quad
\inference
{\forall \hyp{x}{\delta}{A}{e}\in \Gamma \\ R\vdash A }
{R\vdash \Gamma}

\end{array}
$$
\caption{Types and contexts}
\label{types-contexts}
\end{figure}
\begin{example}
One may verify that the judgment $r:(\delta,\tertype \limp \tertype)
\vdash \rgtype{r}{(\tertype \limp \tertype)}$ can be derived while
 judgements
$r:(\delta,\tertype) \vdash \rgtype{r}{(\tertype \limp \tertype)}$
and
$r:(\delta,\rgtype{r}{\tertype}) \vdash \tertype$ cannot.
\end{example}
We notice the following substitution property on types.
\begin{proposition}
\label{type-substitution}
If $R\vdash \forall t.A$ and $R\vdash B$ then 
$R\vdash \sub{A}{B/t}$.
\end{proposition}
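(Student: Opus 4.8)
The plan is to prove this substitution property for well-formed types by induction on the structure of the type $A$. The statement is $R \vdash \forall t.A$ and $R \vdash B$ imply $R \vdash \sub{A}{B/t}$. First I would unfold what the two hypotheses give us. By the well-formedness rule for $\forall t.A$, the judgement $R \vdash \forall t.A$ must have been derived from $R \vdash$ together with $R \downarrow \forall t.A$, and the latter from $R \downarrow A$ under the side condition $t \notin R$ (meaning $t$ does not occur in the region context). Similarly $R \vdash B$ unpacks to $R \vdash$ and $R \downarrow B$. So the real content to establish is the compatibility statement: if $R \downarrow A$ and $R \downarrow B$ (with $t \notin R$), then $R \downarrow \sub{A}{B/t}$; the well-formedness $R \vdash$ part is already available and is unaffected by the type substitution since it only constrains the region context $R$, which does not change.

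The core of the argument is therefore an induction on $A$ showing $R \downarrow A$ and $R \downarrow B$ imply $R \downarrow \sub{A}{B/t}$. The base cases are immediate: for $A = \tertype$ or $A = \behtype$ the substitution does nothing and $R \downarrow \sub{A}{B/t} = R \downarrow A$ holds by the axiom rules; for $A = t'$ a type variable, either $t' = t$, in which case $\sub{A}{B/t} = B$ and we use the hypothesis $R \downarrow B$ directly, or $t' \neq t$, in which case $\sub{A}{B/t} = t'$ and $R \downarrow t'$ holds by the variable axiom. The inductive cases for $A \limpe{e_1}{e_2} \alpha$ and $\dagger A'$ follow by applying the induction hypothesis to the immediate subtypes and reassembling with the corresponding compatibility rule. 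For $A = \forall t''.A'$ we may assume by $\alpha$-renaming that $t'' \neq t$ and $t''$ does not occur in $B$, so the side condition $t'' \notin R$ is preserved and the induction hypothesis applies to $A'$.

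The step I expect to require the most care is the region-type case $A = \rgtype{r}{A'}$. Here $R \downarrow \rgtype{r}{A'}$ was derived from the premise $r:(\delta, A') \in R$, which encodes condition $(2)$ from the informal description: the type carried inside the region type must match exactly the type declared for $r$ in the region context. After substitution we need $R \downarrow \rgtype{r}{\sub{A'}{B/t}}$, which again requires $r:(\delta, \sub{A'}{B/t}) \in R$. The key observation that makes this go through is that $t \notin R$, so no type variable $t$ occurs anywhere in the region context $R$; in particular $t$ does not occur in $A'$ when $A'$ is the type associated to $r$ in $R$. Hence $\sub{A'}{B/t} = A'$, the required membership $r:(\delta, A') \in R$ already holds, and the rule applies directly. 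This is exactly where the side condition $t \notin R$ attached to the $\forall$-introduction rule is essential, and overlooking it would break the argument. With this case handled, the induction closes and, combining the reconstructed $R \vdash$ judgement with $R \downarrow \sub{A}{B/t}$, we conclude $R \vdash \sub{A}{B/t}$.
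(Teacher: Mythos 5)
Your proof is correct and follows essentially the same route as the paper's: an induction on the structure of $A$, with the decisive observation in the region-type case that $t \notin R$ forces $\sub{A'}{B/t} = A'$ for the type $A'$ declared for $r$ in $R$, so the membership premise is unaffected. Your explicit factoring through the compatibility relation $R \downarrow \cdot$ and the $\alpha$-renaming in the nested-$\forall$ case are only cosmetic differences from the paper's case split.
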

A typing judgement takes the form: $ R;\Gamma \vdash^{\delta} P:\alpha
$.  It attributes a type $\alpha$ to the program $P$ occurring at depth
$\delta$, according to region context $R$ and variable context $\Gamma$.
Figure~\ref{poly-type-system} introduces the polynomial type system.
\begin{figure*}[ht]
$$
\begin{array}{c}
  \inference
  {R\vdash\Gamma & x:(\uaff,A) \in \Gamma }
  {R;\Gamma  \vdash^\delta x:A}
\;
\inference
{R\vdash\Gamma }
{R;\Gamma \vdash^\delta \star:\tertype} 

\;

\inference
{R\vdash \Gamma}
{R;\Gamma  \vdash^\delta r:\rgtype{r}{A}}\quad 

  \inference
  {  \fo{x}{M} = 1 
    \\R;\Gamma,x:(\uaff,A)  \vdash^\delta M:\alpha}
  {R:\Gamma \vdash^\delta \lambda x.M: A \limpe{e_1}{e_2} \alpha}

  \\\\

  \inference
  {R;\Gamma \vdash^\delta M:A\limpe{e_1}{e_2} \alpha \\ R;\Gamma \vdash^\delta
    N:A}
  {R;\Gamma \vdash^\delta MN:\alpha}\qquad

  \inference
  { \foa{M} \leq 1 \\R;\Gamma_{\uaff} \vdash^{\delta+1} M:A }
  {R; \Gamma_{\ubang},\Delta_{\upara}, \Psi_{\uaff} \vdash^\delta \oc M:\oc A}

  \;

  \inference
  {R;\Gamma \vdash^\delta M:\oc A & \fo{x}{N} \geq 1 \\ R;\Gamma,x:(\ubang,A) \vdash^\delta
    N:\alpha}
  {R;\Gamma \vdash^\delta \letb{x}{M}{N}:\alpha}\quad

  \inference
  { R;\Gamma_{\uaff},\Delta_{\uaff} \vdash^{\delta+1} M:A}
  { R;\Gamma_{\upara},\Delta_{\ubang},\Psi_{\uaff} \vdash^\delta \msec M: \msec A}

  \\\\

  \inference
  {R;\Gamma \vdash^\delta M:\msec A & \fo{x}{N} = 1\\ R;\Gamma,x:(\upara,A) \vdash^\delta
    N:\alpha }
  {R;\Gamma \vdash^\delta \letp{x}{M}{N}:\alpha}\qquad

\inference
{ t \notin (R;\Gamma)\\R;\Gamma \vdash^\delta M : A}
{R;\Gamma \vdash^\delta M: \forall t.A}

\;

\inference
{R;\Gamma \vdash^\delta M : \forall t.A & R \vdash B}
{R;\Gamma \vdash^\delta M : \sub{A}{B/t}}

\quad

\inference
{R\vdash\Gamma &r:(\delta,A)\in R}
{R;\Gamma  \vdash^\delta \get{r}:A}
\\\\

\inference
{r:(\delta,A)\\ R;\Gamma  \vdash^\delta
M : A}
{R;\Gamma  \vdash^\delta \st{r}{M}:\tertype} 

\;

\inference
{r:(\delta,A) \\ R;\Gamma  \vdash^\delta M:A}
{R;\Gamma \vdash^0  \store{r}{M} : \behtype}

\qquad

\inference
{R;\Gamma  \vdash^\delta P_1:\tertype \text{ or } P_1 = S\\
  R;\Gamma  \vdash^\delta P_2:\alpha}
{R;\Gamma  \vdash^\delta (P_1\para P_2):\alpha}


\;

\inference
{
  R;\Gamma  \vdash^\delta P_i:\alpha_i}
{R;\Gamma  \vdash^\delta (P_1\para P_2):\behtype}  
\end{array}
$$
\caption{A polynomial type system}
\label{poly-type-system}
\end{figure*}
We comment on some of the rules. A $\lambda$-abstraction may only take
a term of result-type as argument, \ie two threads in parallel are not
considered an argument. The typing of $\dagger$-terms is
limited to result-types for we may not duplicate several threads in parallel.
There exists  two rules for typing parallel programs. The one on
the left indicates that a program $P_2$ in parallel with a store or a
 thread producing a terminal value  should have the type of $P_2$ since we might
be interested in its result (note that we omit the symmetric rule for
the program $(P_2 \para P_1)$). The one on the right
indicates that two programs in parallel cannot reduce to a single
result.
\begin{example}
  The program of Figure~\ref{trees} is well-typed
  according to the following derivable judgement:
$$
R;- \vdash^\delta \letb{x}{\get{r}}{\st{r}{(\oc
        x)(\msec x)}} \para \store{r}{\oc (\lambda x.x\star)} : \tertype
$$
where $R=r:(\delta,\forall t.\oc((\tertype \multimap t) \multimap
t))$. Whereas the program in~\eqref{blocked} is not.
\end{example}


\begin{remark}
  We can easily see that a well-typed program is also well-formed.
\end{remark}

The polynomial type system enjoys the subject reduction property for
the largest relation $\reduc \supseteq \reducf \supseteq \reduc_{\sv}$.
\begin{lemma}[Substitution]\hfill
  \label{substitution-types}
  \begin{enumerate}
  \item\label{sub1t}   If $R;\Gamma,\hyp{x}{\uaff}{A}{e_1} \vdash^{\delta} M : B$ and
  $R;\Gamma \vdash^\delta N:A$ then $R;\Gamma \vdash^\delta
  \sub{M}{N/x}:B$.
  \item\label{sub2t}   If $R;\Gamma,\hyp{x}{\upara}{A}{e_1} \vdash^{\delta} M : B$ and
  $R;\Gamma \vdash^\delta \msec N:\msec A$ then $R;\Gamma \vdash^\delta
  \sub{M}{N/x}:B$.
  \item\label{sub3t}   If $R;\Gamma,\hyp{x}{\ubang}{A}{\emptyset} \vdash^{\delta} M : B$ and
  $R;\Gamma \vdash^\delta \oc N:\oc A $ then $R;\Gamma \vdash^\delta
  \sub{M}{N/x}:B$.
  \end{enumerate}
\end{lemma}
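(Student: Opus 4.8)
The plan is to prove the three substitution statements simultaneously by structural induction on the derivation of the first typing judgement (the one for $M$). This is the standard approach for a substitution lemma, and it mirrors the proof of the corresponding depth-system Lemma~\ref{substitution}, which substitutes typing judgements in place of bare depth judgements. The three cases differ only in the usage of the variable $x$ being substituted ($\uaff$, $\upara$, $\ubang$) and in the corresponding hypothesis on $N$ (namely $N:A$, or $\msec N : \msec A$, or $\oc N : \oc A$); I would carry them through in parallel, invoking the appropriate subcase whenever a binder rule is encountered in the derivation of $M$.

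First I would handle the base cases. When $M$ is a variable, a unit, a region, or a $\get{r}$, the substitution either leaves $M$ unchanged (if $M \neq x$) or replaces $x$ by $N$. In the latter case the usage constraint on $x$ must match: for $x:\uaff$ the variable rule gives $M = x : A$ and $\sub{x}{N/x} = N$, so $R;\Gamma \vdash^\delta N : A$ is exactly the second hypothesis. For $x:\upara$ and $x:\ubang$, the variable rule requires usage $\uaff$, so $x$ cannot actually appear as a leaf via the variable axiom; the occurrence of $x$ with usage $\upara$ or $\ubang$ is only introduced when constructing a $\msec$-term or a $\oc$-term, which is handled in the inductive step. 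I would make this observation explicit, since it is the key to matching the shape of $N$ ($\msec N$ or $\oc N$) with the point at which $x$ is consumed.

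The inductive step proceeds by case analysis on the last rule. The routine cases (application, $\lambda$-abstraction, the two parallel rules, $\st{r}{\cdot}$, $\store{r}{\cdot}$, the two polymorphism rules) commute with substitution and follow by applying the induction hypothesis to the premises and reapplying the same rule; here I would use that $R \vdash \Gamma$ is preserved and that weakening (the analogue of Lemma~\ref{substitution}.\ref{weakening}) lets me match contexts. The interesting cases are the modal introductions and the three $\mathsf{let}\,\dagger$ eliminations. For $\oc M$: when substituting a variable of usage $\ubang$, the subterm $M$ is typed at depth $\delta+1$ with a context $\Gamma_{\uaff}$, so I must supply a derivation of $N$ at the deeper level; this is exactly why the hypothesis is phrased as $\oc N : \oc A$, whose introduction rule furnishes $R;\Gamma_{\uaff} \vdash^{\delta+1} N : A$, the judgement needed one level in. Symmetrically for $\msec M$ and the $\upara$ case, the hypothesis $\msec N : \msec A$ unpacks to a depth-$(\delta+1)$ derivation of $N$. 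The occurrence-counting predicates ($\fo{x}{\cdot}$) are preserved because substitution of a single variable does not alter the count of the other bound variables, and linearity of the $\uaff$ and $\upara$ substitutions ensures $N$ is placed at exactly one occurrence.

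The main obstacle I anticipate is the interaction between the depth shift and the usage discipline in the modal cases: I must track carefully that when $x$ carries usage $\ubang$ (resp.\ $\upara$) it is consumed precisely inside a $\oc$-subterm (resp.\ $\msec$-subterm) one depth level down, so that the matching hypothesis $\oc N : \oc A$ (resp.\ $\msec N : \msec A$) can be destructed to yield a derivation of $N$ at the correct depth and with the correct (weakenable) context. A subtle point is that in the $\oc$-introduction rule the premise requires $\foa{M} \leq 1$, so at most one free occurrence may be substituted at that level; I would check that the usage bookkeeping in the rule ($\Gamma_{\ubang},\Delta_{\upara},\Psi_{\uaff}$ splitting) is respected after substitution, which is where the effect annotations $e_1,e_2$ must be threaded through without disturbing the argument. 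Once these modal cases are discharged, the remaining cases are direct.
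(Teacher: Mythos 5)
Your proposal is correct and follows essentially the same route as the paper's proof: a single induction on the typing derivation of $M$ covering all three usages at once, with the key observations that the variable axiom only consumes $\uaff$-variables (so $\upara$/$\ubang$ substitutions are resolved at the $\msec$-/$\oc$-introduction rules by destructing the hypothesis $\msec N:\msec A$ or $\oc N:\oc A$ into a depth-$(\delta+1)$ judgement for $N$), together with weakening to match contexts and preservation of the occurrence-counting side conditions. Nothing essential is missing.
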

\begin{proposition}[Subject Reduction]
\label{subred-types} If $R;\Gamma
  \vdash^\delta P:\alpha$ and $P\reduc P'$ then $R;\Gamma
  \vdash^\delta P':\alpha$.
\end{proposition}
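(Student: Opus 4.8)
The plan is to proceed by induction on the derivation of $P \reduc P'$, reducing the statement to two ingredients: invariance of typing under the structural equivalence $\equiv$, and a context replacement lemma that localises the argument at the redex. Since $\reduc$ is defined modulo $\equiv$ and inside an arbitrary evaluation context $E$, I would first establish that if $R;\Gamma \vdash^\delta P:\alpha$ and $P \equiv Q$ then $R;\Gamma \vdash^\delta Q:\alpha$. For $\alpha$-renaming this is routine; for commutativity and associativity of parallel composition it follows by inspecting the two parallel rules of Figure~\ref{poly-type-system} (together with the omitted symmetric variant), noting that both orient the $\behtype$/result-type bookkeeping symmetrically so that reassociating or swapping threads never invalidates a derivation.

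Next I would prove a decomposition lemma for evaluation contexts: a derivation of $R;\Gamma \vdash^\delta E[M]:\alpha$ can be cut at the hole, yielding $R;\Gamma' \vdash^{\delta'} M:\beta$ for some $\Gamma'\supseteq\Gamma$, some type $\beta$, and some depth $\delta'$ determined by the path from the root of $E$ to the hole; conversely, replacing $M$ by any $M'$ with $R;\Gamma' \vdash^{\delta'} M':\beta$ yields $R;\Gamma \vdash^\delta E[M']:\alpha$. The essential bookkeeping is that each $\dagger$ crossed on the path to the hole increments the depth by one, while a store frame $\store{r}{\cdot}$ resets the judgement depth to $0$ but types its content at depth $R(r)$, exactly matching the revised notion of depth. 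Stating this lemma with respect to the revised depth is where I would spend the most care.

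With these in hand the proof reduces to checking, for each reduction rule, that redex and contractum admit the same type at the same depth and context. For $(\beta)$, $(\oc)$ and $(\msec)$ I would invert the typing of the redex, using generation lemmas that peel off any trailing $\forall$-introduction/elimination steps (threaded through via Proposition~\ref{type-substitution}), to expose the premises of the relevant rule, and then conclude by the corresponding clause of the Substitution Lemma~\ref{substitution-types}: for instance inverting $(\lambda x.M)N:\alpha$ gives $R;\Gamma,\hyp{x}{\uaff}{A}{e_1}\vdash^\delta M:\alpha$ and $R;\Gamma\vdash^\delta N:A$, whence part~\ref{sub1t} yields $\sub{M}{N/x}:\alpha$; the $(\oc)$ and $(\msec)$ cases use parts~\ref{sub3t} and~\ref{sub2t}. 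For $(\gcsf)$ the left parallel rule gives that $\star \para M$ has the type of $M$, so erasing the terminated thread is immediate.

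The main obstacle is the two non-local rules. For $(\getsf)$, $E[\get{r}] \para \store{r}{M} \reduc E[M]$, the hole is typed at depth $\delta'=R(r)$ with the region content type $A$, while the store premise forces $R;\Gamma \vdash^{R(r)} M:A$; the depths and types therefore align and the replacement lemma applies. For $(\setsf)$, $E[\st{r}{M}] \reduc E[\star] \para \store{r}{M}$, the redex has type $\tertype$ at depth $R(r)$, matched by $\star:\tertype$, and the freshly exposed global store is typed at depth $0$ with type $\behtype$ using the premise $R;\Gamma \vdash^{R(r)} M:A$; since $\fv{M}=\emptyset$ the content is closed and typeable in the top-level context by weakening. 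The delicate point throughout is precisely that the judgement depth of a store is $0$ whereas its content lives at depth $R(r)$, so the replacement lemma must speak of the revised depth for the $(\getsf)$ and $(\setsf)$ cases to typecheck.
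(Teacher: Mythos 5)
Your overall strategy matches the paper's: invariance of typing under $\equiv$, a decomposition/replacement lemma for evaluation contexts, the Substitution Lemma~\ref{substitution-types} for the $(\beta)$, $(\oc)$, $(\msec)$ redexes, and a depth-bookkeeping argument for $(\getsf)$ and $(\setsf)$ that exploits the revised notion of depth for stores. However, your context replacement lemma is stated too strongly and is false as written: you claim that \emph{any} $M'$ with $R;\Gamma' \vdash^{\delta'} M':\beta$ can be plugged back into the context. The typing rules carry linearity predicates on the binders ($\fo{x}{M}=1$ for $\lambda$-abstraction and $\letsec$, $\fo{x}{N}\geq 1$ for $\letbang$, $\foa{M}\leq 1$ for $\oc$-introduction), and these predicates are evaluated on the \emph{whole} body of the binder, which may lie inside the context $E$ above the hole. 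If $M'$ contains fewer (or, for non-$\oc$ usages, a different number of) occurrences of a variable bound by $E$ than $M$ does, the derivation for $E[M']$ breaks even though $M'$ has the right type in the right sub-context: take $E=\lambda x.[\cdot]$, $M=x$ and any $M'$ of the same type not mentioning $x$.

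The paper's Lemma~\ref{eva-sub-lem} therefore adds the side condition that for each $x\in dom(\Gamma')\setminus dom(\Gamma)$ one has $\fo{x}{M'}=\fo{x}{M}$ when the usage is not $\oc$, and $\fo{x}{M'}\geq\fo{x}{M}$ when it is. The lemma is then applicable to each contractum precisely because the depth system is strictly linear (no free variable of the redex can be discarded by reduction) and only variables of usage $\oc$ can be duplicated. Your proof needs this extra condition and the corresponding check in each case of the case analysis; without it the "conversely, replacing $M$ by any $M'$" step does not go through. The rest of your argument (the generation/inversion steps for the functional redexes, the alignment of $R(r)$ with the store content's depth for $(\getsf)$/$(\setsf)$, and the use of closedness of $M$ in $(\setsf)$) coincides with the paper's Lemmas~\ref{fun-redex} and~\ref{side-eff-redex}.
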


\vspace{3cm}
Finally, we establish a progress proposition which shows that any
well-typed call-by-value program (\ie defined from Figure~\ref{cbv}) reduces to several
threads in parallel which are values or deadlocking reads. 
\begin{proposition}[Progress]
\label{progress}
Suppose $P$ is a closed typable call-by-value program which cannot
reduce. Then $P$ is structurally equivalent to a program
\[
M_1 \para \cdots \para M_m \para S_1 \para \cdots
\para S_n\quad m,n\geq 0
\]
where $M_i$ is either a value or can only be decomposed as a term
$F_\sv[\get{r}]$ such that no value is
associated with the region $r$ in the stores $S_1,\ldots,S_n$.
\end{proposition}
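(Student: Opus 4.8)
The plan is to proceed by a standard analysis of the structure of a closed, typable, call-by-value program $P$ that cannot reduce. First I would use structural equivalence (associativity and commutativity of $\para$) to flatten $P$ into a parallel composition $P \equiv P_1 \para \cdots \para P_k$ where each $P_i$ is \emph{not} itself a top-level parallel composition; by the syntax of programs in Figure~\ref{cbv}, each such $P_i$ is either a term $M$ or a store assignment $\store{r}{V}$. The stores are collected into the $S_1,\ldots,S_n$ and the terms into a tentative list $M_1,\ldots,M_m$; it then remains to show that each irreducible term $M_i$ is either a value or of the form $F_\sv[\get{r}]$ with no matching assignment in the stores. Note first that the $(\gcsf_\sv)$ rule guarantees no $M_i$ is of the form $\star$ sitting in a parallel context with another term, so we need not worry about stuck terminal-unit garbage.

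The core of the argument is a lemma, proved by induction on the structure of a closed term $M$, stating that if $M$ is typable at depth $\delta$ (in a region context $R$ and empty variable context) and $M$ is not a value, then either $M$ admits a reduction or $M$ decomposes as $F_\sv[\get{r}]$. I would case-split on the outermost constructor of $M$ following the grammar $M ::= V \mid MM \mid \msec M \mid \letd{x}{M}{M} \mid \get{r} \mid \st{r}{M} \mid (M \para M)$. The interesting cases are the destructors. For an application $M_1 M_2$: by subject reduction and the typing rule for application, $M_1$ has a functional type $A \limpe{e_1}{e_2}\alpha$; applying the induction hypothesis to $M_1$, if it is not a value it either reduces (so does $M_1 M_2$, by placing the redex under the context $F_\sv M_2$) or is a stuck $F_\sv[\get{r}]$ (and then $M_1 M_2 = F_\sv'[\get{r}]$ with $F_\sv' = F_\sv M_2$); if $M_1$ \emph{is} a value, typing forces it to be a $\lambda$-abstraction $\lambda x.N$, and then I recurse on $M_2$ in the context $V F_\sv$, the base case being that $M_2$ is a value $V$, whence $(\beta_\sv)$ fires. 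The $\letb{x}{M'}{N}$, $\letp{x}{M'}{N}$, $\st{r}{M'}$, and $\msec M'$ cases follow the same pattern: drive $M'$ to a value using the induction hypothesis and the appropriate evaluation context from $F_\sv$, and observe that the typing of the scrutinee forces it to be $\oc V$, $\msec V$, etc., so that $(\oc_\sv)$, $(\msec_\sv)$, or $(\setsf_\sv)$ applies. The case $M = \get{r}$ is the one that may genuinely get stuck: it is already in the required form $F_\sv[\get{r}]$ with $F_\sv = [\cdot]$.

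The main subtlety, and the step I expect to require the most care, is the treatment of $\get{r}$ at the \emph{program} level rather than the term level: whether a term $F_\sv[\get{r}]$ is truly irreducible depends on the \emph{global} configuration, since the $(\getsf_\sv)$ rule $F_\sv[\get{r}] \para \store{r}{V} \reduc_\sv F_\sv[V]$ consumes an assignment from a store that may sit in a \emph{sibling} component $S_j$. Thus the term-level lemma only yields that each non-value $M_i$ is of the shape $F_\sv[\get{r}]$; I must then argue at the top level that if the program is globally irreducible, then for \emph{every} such exposed read $\get{r}$ there is no assignment $\store{r}{V}$ anywhere among $S_1,\ldots,S_n$, since otherwise $(\getsf_\sv)$ would fire modulo $\equiv$ (using commutativity to bring the read and the store adjacent). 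This is exactly the escape clause allowed by the statement. The only remaining obligation is to confirm that typability is preserved when we decompose $P$ via $\equiv$ into its parallel components and that each component inherits a valid typing judgement at the appropriate depth — which is immediate from the two parallel-composition typing rules of Figure~\ref{poly-type-system} together with the invariance of typing under $\equiv$ (a routine consequence of Proposition~\ref{subred-types}).
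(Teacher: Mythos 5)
Your proposal is correct and follows essentially the same route as the paper: a classification of closed values by inversion on the typing rules (a value of arrow type is a $\lambda$-abstraction, of type $\oc A$ is $\oc V$, etc.), combined with a structural induction on each thread showing it is either a value or a stuck $F_\sv[\get{r}]$, with the matching-store check deferred to the top level. Your additional care about flattening $P$ modulo $\equiv$ and about the global nature of the $(\getsfv)$ rule is implicit in the paper's proof but does not constitute a different argument.
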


\section{Expressivity}
\label{sec-expr}
We now illustrate the expressivity of the polynomial type
system. First we show that our system is complete in the
extensional sense: every polynomial time function can be represented (Subsection~\ref{subsec-complete}). Then
we introduce a language with memory locations representing higher-order references for which
the type system can be easily adapted (Subsection~\ref{subsec-references}). Building
on this language, we give an example of polynomial programming (Subsection~\ref{subsec-programming}).

As a first step, we define some
Church-like encodings in Figure~\ref{church} where we abbreviate $\lambda x.\letd{x}{x}{M}$ by
$\lambda^\dagger x.M$. 
\begin{figure}[ht]
  $$
  \begin{array}{@{}rcl@{}}
    \nat &=& \forall t.\oc(t \multimap t) \multimap \msec (t \multimap t)\\    
\churchn{n} &:& \nat\\  
    \churchn{n} &=& \lambda^{\oc} f.\msec(\lambda x.\underset{n \text{
        times}}{\underbrace{f(\ldots(f}}x)))\\\\



    \add &:& \nat \multimap \nat \multimap \nat\\
    \add &=& \lambda m.\lambda n.\lambda^{\oc} f.\letp{y}{m\oc
        f}{}\\
    &&\letp{z}{n\oc f}{\msec(\lambda x.y(zx))}\\\\



    \bnat &=& \forall t.\oc(t \multimap t) \multimap \oc(t \multimap
    t) \multimap  \msec (t \multimap t)\\    
 \multicolumn{3}{l}{\text{for } w=i_0\ldots i_n \in \set{0,1}^{*}}  \\
 \churchn{w} &:& \bnat\\
    \churchn{w} &=& \lambda^{\oc} x_0.\lambda
    x_1^{\oc}.\msec(\lambda z.x_{i_0}(\ldots (x_{i_n}z)))\\\\

    \listtype{A} &=& \forall t.\oc(A \multimap t \multimap t)
    \multimap \msec (t \multimap t)\\
    \churchl{u_1,\ldots,u_n} &:& \listtype{A}\\
    \churchl{u_1,\ldots,u_n} &=& \lambda f^{\oc}.\msec (\lambda
      x.fu_1(fu_2\ldots(fu_nx)))\\\\

    \lit &:& \forall u. \forall t.\oc(u\multimap t\multimap t)
    \multimap \listtype{u} 
      \multimap \msec t \multimap \msec t\\
    \lit &=&  \lambda f.\lambda l.\lambda^{\msec} x.\letp{y}{lf}{\msec(yx)}



  \end{array}
  $$
  \caption{Church encodings}
  \label{church}
\end{figure}
We have natural numbers of type $\nat$, binary natural number of type
$\bnat$ and lists of type $\listtype{A}$ that contain values of type $A$.
\subsection{Polynomial completeness}
\label{subsec-complete}
The representation of polynomial functions relies on the representation of binary
words.  The precise notion of representation is spelled out in the
following definitions.

\begin{definition}\textit{(Binary word representation)}
  Let $- \vdash^\delta M : \msec^p\bnat$ for some $\delta,p
  \in \mathbb{N}$. We say $M$ \emph{represents} $w \in \set{0,1}^{*}$,
  written $M \Vdash w$, if $M \treduc \msec^p\churchn{w}$.
\end{definition}
\begin{definition}\textit{(Function representation)}
  \label{def-function-rep}
  Let $- \vdash^\delta F: \bnat \multimap \msec^d\bnat$ where
  $\delta,d \in \mathbb{N}$ and $f:\set{0,1}^{*} \to \set{0,1}^{*}$. We say
$F$ \emph{represents} $f$, written $F \Vdash f$,
if for any $M$ and $w \in \set{0,1}^{*}$  such that $-
\vdash^\delta M:\bnat$ and $M \Vdash w$,
$FM \Vdash f(w)$. 
\end{definition}
The following theorem is a restatement of Girard~\cite{LLL} and Asperti~\cite{Asperti98}.
\begin{theorem}[Polynomial completeness]\hfill\\
    Every function $f:\set{0,1}^{*} \to \set{0,1}^{*}$ which can be
    computed by a Turing machine in time bounded by a polynomial of
    degree $d$ can be represented by a term of type $\bnat
    \multimap \msec^d\bnat$.
\end{theorem}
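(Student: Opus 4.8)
The plan is to encode a polynomial-time Turing machine as a well-typed term following Girard's classical simulation, adapted to this modal calculus with $\msec$ and $\oc$. The key idea is that the type $\bnat = \forall t.\oc(t \multimap t) \multimap \oc(t \multimap t) \multimap \msec(t \multimap t)$ makes a binary word act as an iterator: given two ``step'' functions (for the two bit values) packaged with $\oc$, it produces a $\msec$-wrapped composition. Iterating such a word $n$ times gives us roughly $n$ applications of a transition function, so that a configuration of the Turing machine can be advanced one step by applying the word to a suitable step function, and the polynomial time bound $p(|w|)$ of degree $d$ is captured by nesting iterations $d$ times, producing the $\msec^d$ in the target type.

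First I would fix a representation of Turing machine configurations as a $\bnat$-like datatype (or a tuple type built from $\tertype$, $\multimap$ and the modalities), encoding the tape contents, head position, and state. I would then build a term $\mathit{step}$ of type $\mathit{Config} \multimap \mathit{Config}$ realizing one transition of the machine; this is a finite case analysis and can be written as an ordinary linear $\lambda$-term at a fixed depth. The delicate point is to show $\mathit{step}$ is well-formed/well-typed in the depth discipline: each subterm must respect the affinity of $\lambda$ and $\msec$-binders and the duplication discipline of $\oc$-binders, and since a single transition only reads and rewrites a bounded amount of the configuration, linearity is arranged by threading the configuration through explicitly (using garbage-collecting regions, as the excerpt shows, to discard unused components so that strict linearity is not an obstacle).

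Next I would iterate. Applying a word $\churchn{w}$ to the packaged step function yields $\msec$ of a function computing $|w|$ steps; using a polynomial-construction term (built from iteration on $\bnat$ together with $\add$ and multiplication in the style of Figure~\ref{church}) I would construct, from $w$, a unary count of length $p(|w|)$ and iterate $\mathit{step}$ that many times. Each level of iteration introduces one $\msec$ modality, so a polynomial of degree $d$, obtained as a $d$-fold product, lands the result in $\msec^d\bnat$ after extracting the output tape and re-encoding it as a binary word of type $\bnat$. The whole term has the required type $\bnat \multimap \msec^d\bnat$; here I rely on the fact, established earlier, that well-typed implies well-formed, so that Corollary~\ref{zz} already guarantees the term evaluates in time $O(n^{2^{d'}})$ for its (fixed) depth $d'$, consistent with polynomial time.

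The main obstacle I anticipate is \emph{bookkeeping the modalities and depths so that iteration typechecks}: the $\oc$ wrapping the step functions must be stripped exactly once per iteration via $\letb{}{}{}$, yet the step function must be duplicable enough times to run $p(|w|)$ steps, and each iteration layer pushes the result one modality deeper, so aligning the $\msec$'s produced by iteration with the $\msec^d$ in the target type, while keeping every binder within the affine/linear constraints, requires care. In particular, composing $d$ iterations means feeding a $\msec$-typed iterator into another iterator, which forces the use of $\letp{}{}{}$ at the correct depth and the polymorphic instantiation rule to match the type variable $t$ against $\mathit{Config}$ (or $\msec^k\mathit{Config}$) at each stage. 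Since this is a restatement of the Girard--Asperti completeness result, I would cite their construction for the combinatorial core and concentrate the new argument on verifying that the encoding respects the region-stratified, strictly linear depth discipline of this calculus.
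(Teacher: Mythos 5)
The paper offers no proof of this theorem at all: it is explicitly presented as ``a restatement of Girard and Asperti,'' with the burden carried entirely by the citation. Your sketch follows exactly that classical construction (binary words as iterators, a linear \emph{step} term on configurations, a degree-$d$ polynomial clock contributing one $\msec$ per iteration level) and correctly identifies the only genuinely new obligations in this setting --- strict linearity handled via garbage-collecting regions and the depth/modality bookkeeping --- so it is consistent with, and more detailed than, what the paper provides.
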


\subsection{A language with higher-order references}
\label{subsec-references}
Next, we give an application of the language
 with abstract regions by presenting a connection with a
language with dynamic memory locations representing higher-order references. 

The differences with the region-based system are presented in
Figure~\ref{concrete}.
\begin{figure}[ht]
  $$
  \begin{array}{rcl}
    M &::=& \ldots \mid \nu x.M\\
    F_{\nu} &::=& F_{\sv} \mid \nu x.F_{\nu}
  \end{array}
  $$
  $$
  \begin{array}{lr@{\!\!\!\!\!\!}c@{\!\!\!\!\!\!}l}
    (\nu) & F_\nu[\nu x.M] &\equiv& \nu x.F_\nu[M]\\
    & &\text{if }x \notin \fv{F_\nu}&\\\\
     (\getsf_{\nu}) &F_\nu[\get{x}] \para \store{x}{V} &\reduc_{\nu}& F_\nu[V] \para \store{x}{V}\\
     (\setsf_{\nu}) & F_\nu[\st{x}{V}] \para \store{x}{V'} &\reduc_{\nu}& F_\nu[\star] \para
     \store{x}{V}
  \end{array}
  $$
$$
\begin{array}{c}

\inference
{R;\Gamma, \hyp{x}{u}{\regtype{r}{\oc A}}{} \vdash^\delta M : B}
{R;\Gamma \vdash^\delta \nu x.M : B}
\;
\inference
{R(r) = (\delta,\oc A)\\R;\Gamma\vdash^\delta x : \regtype{r}{\oc A}}
{R;\Gamma  \vdash^\delta \get{x}:\oc A}
\\\\

\inference
{R(r)=(\delta,\oc A)\\R;\Gamma\vdash^\delta x : \regtype{r}{\oc A}\\ R;\Gamma  \vdash^\delta
M : \oc A}
{R;\Gamma  \vdash^\delta \st{x}{M}:\tertype} 

\quad

\inference
{R(r)=(\delta,\oc A)\\R;\Gamma\vdash^\delta x : \regtype{r}{\oc A} \\ R;\Gamma  \vdash^\delta
  V:\oc A}
{R;\Gamma \vdash^0  \store{x}{V} : \behtype}  
\end{array}
$$
  \caption{A call-by-value system with references}
  \label{concrete}
\end{figure}
We introduce terms of the form $\nu x.M$ to
generate a fresh memory location $x$ whose scope is $M$. Contexts are
call-by-value and allow evaluation under $\nu$ binders. The structural rule $(\nu)$ is for
scope extrusion. Region constants have been removed from the syntax
of terms hence reduction rules $(\getsf_{\nu})$ and $(\setsf_{\nu})$
relate to memory locations. The operational semantics of references
is adopted: when assigning a value to
a memory location, the previous value is \emph{overwritten}, and when
reading a memory location, the value is \emph{copied} from the store. 
We see in the typing rules that region constants still appear in region
types and that a memory location must be a free variable that relates to an
abstract region $r$ by having the type $\regtype{r}{A}$.

There is a simple translation from the language with memory locations
to the language with regions. It consists in replacing
 the (free or bound) variables with a region type of the shape
$\regtype{r}{A}$ by the constant $r$. We then observe that 
 read access and assignments to references are
 mapped to several reduction steps in the system with regions. It
 requires the following observation: in the
typing rules, memory locations only relate to regions with duplicable content of
type $\oc A$. This allows us to simulate the
\emph{copy from memory} mechanism of references by decomposing it into a \emph{consume} and
\emph{duplicate} mechanism in the language with regions. More precisely:
 an occurrence of $\get{x}$
where $x$ relates to region $r$ is translated into
\begin{equation*}
  \begin{split}
\letb{y}{\get{r}}{\st{r}{\oc y} \para \oc y}
  \end{split}
\end{equation*}
such that     
\begin{equation*}
  \begin{split}
&F_\sv[\letb{y}{\get{r}}{\st{r}{\oc y} \para \oc y}] \para \store{r}{\oc V}\\
\preduc_{\sv}\;& F[\oc V] \para \store{r}{\oc V}  
  \end{split}
\end{equation*}
simulates the reduction $(\getsf_{\nu})$.
Also, it is easy to see that a reduction step $(\setsf_{\nu})$ can be simulated
by exactly one reduction step $(\setsfv)$. Since typing is preserved
by translation, we conclude that any time complexity bound can be lifted to the language with
 references. 

Note that this also works if we adopt
the operational semantics of communication channels; in that case, memory
locations can also relate to regions containing non-duplicable content
since reading a channel means \emph{consuming} the value.

\subsection{Polynomial programming}
\label{subsec-programming}
Using higher-order references,
we show that it is possible to program the iteration of operations producing a side effect on an
inductive data structure, possibly in parallel.

Here is the function $\mathsf{update}$ taking as argument a memory location $x$ related
  to region $r$ and incrementing the numeral stored at that location:
$$
\begin{array}{l@{}}
  \hyp{r}{3}{\oc\nat}{\set{r}}; - \vdash^2
  \mathsf{update} : \oc\regtype{r}{\oc\nat} \limpe{\emptyset}{\emptyset}
  \msec \tertype \multimap \msec \tertype\\
  \mathsf{update} = \lambda^\oc x.\lambda^\msec z.\msec
      (\st{x}{\letb{y}{\get{x}}{\oc(\add \; \churchn{2}\; y)}} \para z)
\end{array}
$$
The second argument $z$ is to be garbage collected.
Then we define the program $\mathsf{run}$ that iterates the function
$\mathsf{update}$ over a list $\churchl{\oc x, \oc y, \oc z}$ of 3 memory locations:
$$
\begin{array}{l}
   \hyp{r}{3}{\oc \nat}{\set{r}} \vdash^1 \mathsf{run} : \msec
   \msec \tertype\\
   \mathsf{run} = \lit \; \oc \mathsf{update} \;
   \churchl{\oc x,\oc y,\oc z} \; \msec\msec \star
\end{array}
  $$
All addresses have type $\oc \regtype{r}{\oc \nat}$ and thus relate to
  the same region $r$. Finally, the
  program $\mathsf{run}$ in parallel with some store assignments reduces as expected:
 $$
  \begin{array}{c}
  \mathsf{run}  \para  \store{x}{ \oc\churchn{m}} \para
   \store{y}{\oc\churchn{n}} \para \store{z}{ \oc\churchn{p}}\\
  \treduc_{\nu} \msec\msec \star \para  \store{x}{ \oc\churchn{2+m}} \para
  \store{y}{ \oc\churchn{2+n}} \para \store{z}{ \oc\churchn{2+p}}
  \end{array}
  $$
Note that due to the Church-style encoding of numbers and lists, we
assume that the relation $\reduc_{\nu}$ may reduce under binders when required.

Building on this  example, suppose we want to write a program
of three threads where each thread concurrently 
increments the numerals  pointed by the memory locations of the list.
Here is the function $\mathsf{gen\_threads}$ taking a functional $f$
and a value $x$ as arguments and generating three
threads where $x$ is applied to $f$:
$$
  \begin{array}{l}
    \hyp{r}{3}{\oc\nat}{\set{r}} \vdash^0 \mathsf{gen\_threads} : \forall t.\forall t'.\oc(t \limpe{\emptyset}{\set{r}} t')
    \limpe{\emptyset}{\emptyset} \oc t \limpe{\emptyset}{\set{r}} \behtype\\
    \mathsf{gen\_threads} = \lambda^\oc f.\lambda^\oc
      x.\msec(fx) \para \msec (fx) \para \msec (fx)
  \end{array}
  $$
  We define the functional $\mathsf{F}$ like $\mathsf{run}$ but parametric in the
  list:
 $$
  \begin{array}{l}
    \hyp{r}{3}{\oc\nat}{\set{r}} \vdash^1 \mathsf{F} : \listtype{\oc\regtype{r}{\oc\nat}} \limpe{\emptyset}{\set{r}} \msec\msec \tertype\\
    \mathsf{F} = \lambda l.\lit \; \oc \mathsf{update} \; l \; \msec\msec \star
  \end{array}
  $$
  Finally the concurrent iteration is defined in $\mathsf{run\_threads}$:
  $$
  \begin{array}{l}
      \hyp{r}{3}{\oc\nat}{\set{r}} \vdash^0 \mathsf{run\_threads} : \behtype\\
    \mathsf{run\_threads} = \mathsf{gen\_threads} \; \oc \mathsf{F} \;
    \oc\churchl{\oc x,\oc y,\oc z}
  \end{array}
  $$
  The program is well-typed for side effects occurring at depth
  $3$ and it reduces as follows:
  $$
  \begin{array}{l}
  \mathsf{run\_threads} \para  \store{x}{ \oc\churchn{m}} \para
  \store{y}{ \oc\churchn{n}} \para \store{z}{ \oc\churchn{p}}\\
  \treduc_{\nu}  \msec\msec\msec \star\para \store{x}{ \oc\churchn{6+m}} \para
  \store{y}{ \oc\churchn{6+n}} \para \store{z}{ \oc\churchn{6+p}}
  \end{array}
  $$
Note that different thread interleavings are possible but in this
particular case they are confluent.

\section{Conclusion and Related work}
We have proposed a type system for a higher-order functional language
with multithreading and side effects that guarantees termination in
polynomial time, covering any scheduling of threads and  taking
account of thread generation. To the best of our knowledge, there appears to be no other
characterization of polynomial time in such a language.
The polynomial soundness of the call-by-value strategy relies on 
the simulation of call-by-value by a shallow-first strategy which is
proved to be polynomial. The proof is a significant adaptation of Terui's
methodology~\cite{Terui07}: it is greatly simplified by a strict linearity condition
and based on a clever analysis of the evaluation order of side effects
which is shown to be preserved.

\paragraph{Related work}
The framework of light
logics has been previously applied to a higher-order
$\pi$-calculus~\cite{LagoExpress} and a functional language with
pattern-matching and recursive definitions~\cite{BaillotGM10}.
 The notion of
\emph{stratified region}\footnote{Here we speak of stratification by
   means of a type-and-effect discipline, this is not to be confused
   with the notion of stratification \emph{by depth level} that is used
   in the present paper.}
has been proposed~\cite{Boudol10,Amadio09} to
ensure the termination of a higher-order multithreaded language with
side effects . In the setting of
\emph{synchronous} computing, static analyses have been developed to
bound resource consumption in a synchronous
$\pi$-calculus~\cite{AmadioD07} and a multithreaded first-order
language~\cite{AmadioD06}. Recently, the framework of \emph{complexity
  information flow} have been applied to characterize
polynomial multithreaded imperative programs~\cite{MarionPechoux}.


\paragraph{Acknowledgments} The author wishes to thank Roberto Amadio
for his precious help on the elaboration of this work and Patrick
Baillot for his careful reading of the paper.
 
\bibliographystyle{abbrvnat} 
\renewcommand{\bibfont}{\normalsize}
\bibliography{references}
\end{document}